\documentclass{article}
\usepackage{graphicx} 
\usepackage{amsmath}
\usepackage{amsfonts}
\usepackage{amsthm}
\usepackage{amssymb}
\usepackage[margin=1in]{geometry}
\usepackage{xcolor}
\usepackage{appendix}
\usepackage{cite}
\usepackage{mathrsfs} 
 \usepackage{pgfplots} 
 \pgfplotsset{compat=1.17}
 \usepackage{subcaption}
 \usepackage{hyperref}
 \usepgfplotslibrary{groupplots}

\newcommand{\bbC}{\mathbb{C}}
\newcommand{\rmd}{\mathrm{d}}
\newcommand{\bbE}{\mathbb{E}}\newcommand{\rme}{\mathrm{e}}

\newcommand{\rmi}{\mathrm{i}}

\newcommand{\bbN}{\mathbb{N}}

\newcommand{\bbR}{\mathbb{R}}

\newcommand{\sfD}{\mathsf{D}}

\newcommand{\cD}{\mathcal{D}}

\newcommand{\cI}{\mathcal{I}}

\newcommand{\scrN}{\mathscr{N}}

\newcommand{\scrS}{\mathscr{S}}

\DeclareMathOperator{\supp}{supp}

\newtheorem{thm}{Theorem}

\newtheorem{prop}{Proposition}
\newtheorem{cor}{Corollary}
\newtheorem{lem}{Lemma}
\newtheorem{rem}{Remark}

\newtheorem{defi}{Definition}

\title{ On the Evolution of the Capacity-Achieving Input Support for the Amplitude-Constrained AWGN Channel}
\author{
Luca~Barletta%
\thanks{Luca Barletta is with the Dipartimento di Elettronica, Informazione e Bioingegneria, Politecnico di Milano, 20133 Milano, Italy (e-mail: luca.barletta@polimi.it).}
\and
Alex~Dytso%
\thanks{Alex Dytso is with Qualcomm Flarion Technology, Inc., Bridgewater, NJ 08807, USA (e-mail: odytso2@gmail.com).}
}

\begin{document}

\maketitle

\begin{abstract}
We consider an additive white Gaussian noise channel subject to a peak-amplitude constraint and study how the support of the capacity-achieving input distribution (CAID) evolves as the amplitude constraint~$A$ varies. Although the CAID is known to be unique, symmetric, discrete, and finitely supported, the structure of its support transitions has remained largely unresolved. We show that the origin is the only possible degenerate support point and the only possible inactive contact point of the KKT function. We then establish continuity properties of the optimal distribution and the KKT function and prove that every nonzero support point is locally stable: under small changes in $A$, it persists as a unique nearby atom whose location and probability mass vary continuously. Combining these results, we prove that, locally,  the support cardinality at a nearby amplitude is either unchanged or larger by one and that any such increase can occur only at the origin. 
 Consequently, all local changes in support cardinality are confined to the origin: locally, the only possible transition mechanisms are the appearance or disappearance of the point at the origin and the splitting or merging of the origin into a symmetric pair.
\end{abstract}

\section{Introduction}
We consider a discrete-time additive white Gaussian noise (AWGN) channel subject to a peak-amplitude constraint. The channel output is given by
\begin{equation}
Y = X + Z,
\end{equation}
where the input random variable $X$ satisfies $|X| \leq A$ almost surely (a.s.), and $Z$ is a standard normal random variable independent of $X$.

We are interested in the channel capacity, defined as
\begin{equation}
C(A) = \max_{X:\:|X|\leq A} I(X;Y),
\label{eq:Capacity_def}
\end{equation}
where $I(X;Y)$ denotes the mutual information between $X$ and $Y$. We denote by $X^\star$ a capacity-achieving input random variable and by $Y^\star$ the corresponding induced output random variable. In general, both the exact value of the capacity $C(A)$ and the precise structure of the capacity-achieving input distribution (CAID) $P_{X^\star}$ remain unknown.

It was established in \cite{smith1971information} that the CAID is discrete and has finite support. More recent results have established lower and upper bounds on the support cardinality of orders $A\sqrt{\log A}$ and $A^2$, respectively~\cite{wang2025improvedlowerboundcardinality}. Setting the cardinality question aside, in this work we are interested in how the support evolves as $A$ increases. Fig.~\ref{fig:pmf_evo} illustrates the evolution of the support as a function of $A$, as computed numerically.
\begin{figure}
\centering
\input{pmf_evol.tex}
\caption{Evolution of $\supp(P_{X^\star})$ as a function of $A$.  Each marker's size is proportional to its probability mass.}
\label{fig:pmf_evo}
\end{figure}
From Fig.~\ref{fig:pmf_evo}, we make the following observations:
\begin{enumerate}
\item The points $\pm A$ always belong to the support.
\item New support points always appear at zero.
\item The number of support points increases by at most one at each transition, either through the appearance of a new point or through the splitting of an existing point into two.
\item  Every nonzero support point is locally persistent and moves continuously under sufficiently small changes of $A$.
\end{enumerate}

The first claim was established by Smith in \cite{smith1971information}. 
 Claims~2 and~3 were investigated in
\cite{sharma2010transition} and conjectured to hold for all $A$.\footnote{Smith in~\cite[pp.~42--44]{smith1969Thesis} also implicitly makes this conjecture in the construction of his numerical procedure for finding the capacity-achieving distribution.}
In this work, we prove Claim~4 and establish local, unoriented forms of Claims~2 and~3. Specifically, we show that every nonzero support point has a unique local continuation, whereas all local changes in support cardinality are confined to the origin and can change the
cardinality by at most one. Determining the direction of these transitions as $A$ increases, and hence proving monotonicity of the support cardinality, remains open.

\subsection{Past Work} 
The literature on CAIDs is now extensive, and we do not attempt to provide a comprehensive review here. The interested reader is referred to~\cite{CISSdytso2018,dytso2019capacity,PoissonCardBounds} and the references therein.

The channel capacity problem in~\eqref{eq:Capacity_def} was first investigated by Shannon in his seminal work~\cite{Shannon:1948}, where he established the first upper and lower bounds on capacity. He further showed that, in the low-SNR regime, the capacity under a peak-amplitude constraint has the same asymptotic behavior as that under an average-power (second-moment) constraint.

Further progress in characterizing the capacity and the structure of the optimal input distribution was made by Smith~\cite{smith1969Thesis,smith1971information}. Smith proved that the CAID is unique, symmetric about the origin, and discrete with finite support. He also showed that, for every \(A<0.1\), the optimal distribution assigns equal probability to the two points \(\{\pm A\}\), thereby yielding an explicit capacity characterization in this regime. However, the precise scaling of the support size with the amplitude constraint remains unresolved. The best known non-asymptotic bounds, \cite{wang2025improvedlowerboundcardinality} and~\cite{dytso2017amplitude_longer}, place it between  $A\sqrt{\log A}$ and $A^2$, respectively.  

The work most closely related to ours is that of Sharma and Shamai~\cite{sharma2010transition}, who investigated transitions in the support of the CAID. They proved that the equiprobable binary distribution supported on $\{\pm A\}$ is optimal if and only if $A\leq \bar{A}\approx 1.665$, where $\bar{A}$ is characterized as the solution of an integral equation.  Furthermore, they demonstrated that a ternary input supported on $\{-A,0,A\}$ is capacity-achieving for all $\bar{A} \le A \le \bar{\bar{A}} \approx 2.9075$.\footnote{ The value reported in~\cite{sharma2010transition} is
$\bar{\bar{A}}\approx 2.786$. Our numerical computations suggest that this
value is inaccurate and that the transition from a three-point to a four-point
CAID instead occurs at
$\bar{\bar{A}}\approx 2.9075$. } Finally, Sharma and Shamai \cite{sharma2010transition} conjectured that
the support cardinality $K(A)$ is nondecreasing in $A$ and that it
increases by at most one at each transition. Their numerical
investigation further suggested that an additional mass point
originates at the center of the input interval. 
 In this work, we show that, relative to any fixed amplitude $A_0$, the support cardinality at all sufficiently nearby amplitudes is either $K(A_0)$ or $K(A_0)+1$. Moreover, every nonzero support point is
locally persistent, so all local birth/death and  splitting/merging events are confined to the origin. Our result does not determine the orientation of these transitions as $A$ increases, and the global monotonicity of $K(A)$ remains open.

Beyond questions concerning the form of the CAID, considerable attention has been devoted to bounding the capacity in~\eqref{eq:Capacity_def}. The principal techniques for deriving upper bounds can be organized into three broad classes. One approach invokes the maximum-entropy property~\cite[Chapter~12]{CoverInfoTheory}: appropriate moment constraints on the channel output are used to bound its differential entropy $h(Y)$, and hence the mutual information~\cite{shamai1990capacity,dytso2019amplitude}. A second approach follows from capacity duality. In this framework, an auxiliary output distribution is introduced, and the capacity is bounded through a relative-entropy minimization; evaluating the resulting expression at any admissible, though generally nonoptimal, output distribution produces an upper bound. This method underlies, among others, the McKellips bound~\cite{mckellips2004simple} and the bound of Thangaraj and Kramer~\cite{thangaraj2017capacity}; a detailed treatment of the duality method can be found in~\cite{lapidoth2003capacity}. A third line of analysis uses the integral relationship between mutual information and the minimum mean-square error (MMSE)~\cite{I-MMSE}. In this case, tractable upper bounds are obtained by evaluating the estimation error of a suitably chosen suboptimal estimator~\cite{dytso2017view}.

Related support-transition phenomena have also been studied in finite-alphabet rate-distortion problems. In particular, Agmon~\cite{agmon2023root} showed that optimal test channels typically follow piecewise-smooth solution paths interrupted by bifurcations, including cluster-vanishing and support-switching transitions. That work relies on implicit differentiation and root tracking of Blahut--Arimoto fixed points, whereas our approach studies the zeros and multiplicities of the KKT function for the amplitude-constrained Gaussian channel.

\subsection{Outline and Contributions}

The organization of the paper and its main contributions are as follows:
\begin{enumerate}
    \item Section~\ref{sec:preliminaries} develops the analytical
    preliminaries used throughout our proofs:
    \begin{itemize}
        \item Section~\ref{sec:KKT_conditions} recalls the KKT
        conditions and introduces the KKT function, which is our main
        tool for studying the structure of the optimal support. We also
        derive the higher-order derivatives of the KKT function and
        discuss its entire extension to the complex plane.

        \item Section~\ref{sec:classification_of_zeros} classifies the
        interior zeros of the KKT function into three categories.

        \item Section~\ref{sec:zero_counting_stability} reviews the
        zero-counting and stability tools used in our analysis,
        including Karlin's oscillation theorem, Rolle's theorem with
        multiplicities, and Hurwitz's theorem.
    \end{itemize}

    \item Section~\ref{sec:zeros_G_A} studies the structural properties
    of the zeros of the KKT function:
    \begin{itemize}
        \item Section~\ref{sec:bounds_on_number_zeros} revisits the
        bound on the number of zeros of the KKT function established
        in~\cite{dytso2019capacity}, while explicitly accounting for
        their multiplicities.

        \item Section~\ref{sec:mult_of_zeros} studies the multiplicities
        of the zeros of $G_A$. The main result of this subsection shows
        that the origin is the only possible degenerate interior support
        point and the only possible inactive contact point.
    \end{itemize}

    \item Section~\ref{sec:transition_points} studies how the support
    points evolve as $A$ varies:
    \begin{itemize}
        \item Section~\ref{sec:local_stability_of_non_degenerate}
        establishes several continuity results with respect to $A$,
        including the weak continuity of the CAID and the locally uniform continuity of the KKT
        function. The main result of this subsection shows that
        nondegenerate support points are locally stable: they move
        continuously and remain single atoms.

        \item Section~\ref{sec:transtion_of_sup_points} establishes one
        of our main results: locally, the number of support points can
        increase by at most one as $A$ varies.
    \end{itemize}
\end{enumerate}

 The code and numerical data used to reproduce all numerical figures in this paper are publicly available online \cite{BarlettaDytso2026Code}. The remainder of this section is used to define some relevant notation.  
\subsection{Notation}
Throughout the paper, random variables are denoted by uppercase letters and their realizations by the corresponding lowercase letters. The distribution of a random variable $X$ is denoted by $P_X$, and its support is defined as \begin{equation} \supp(P_X) = \left\{ x\in\bbR: P_X(\mathcal O)>0 \text{ for every open set }\mathcal O\ni x \right\}. \end{equation} The notation $|\cdot|$, depending on the context, denotes either absolute value or the cardinality of a set. Weak convergence of probability measures is denoted by $\Rightarrow$. The relative entropy between probability distributions $P$ and $Q$ is denoted by $\sfD(P\|Q)$, while $I(X;Y)$ and $h(X)$ denote mutual information and differential entropy, respectively.  All logarithms are natural. The standard Gaussian density is denoted by \begin{equation} \phi(x) = \frac{1}{\sqrt{2\pi}} \rme^{-x^2/2}, \qquad x\in\bbR. \end{equation} Finally, for $z\in\bbC$ and $r>0$, $B(z,r)$ and $\overline{B(z,r)}$ denote the open and closed disks of radius $r$ centered at $z$, respectively.

\section{Preliminaries}  
\label{sec:preliminaries}

In this section, we discuss some of the preliminary results and definitions needed in our analysis. 

\subsection{KKT Conditions}
\label{sec:KKT_conditions}

A starting tool for studying discrete distributions is the following set of KKT conditions \cite[Cor.~1]{smith1971information}, \cite[Thm.~10]{CISSdytso2018}.

\begin{lem}\label{lem:KKT}  The capacity-achieving distribution $P_{X^\star}$ and induced output distribution $P_{Y^\star}$ satisfy the following:  for $A>0$ let
\begin{equation}
    G_A(x)  =  \sfD( P_{Y|X}(\cdot| x) \| P_{Y^\star}) - C(A);
\end{equation}
Then,
\begin{align}
  G_A(x)  &\le 0 , \quad x \in [-A,A],  \label{eq:KKT_upper_bound} \\
   G_A(x) &= 0, \quad x\in {\rm supp}(P_{X^\star}). \label{eq:equality_eq_in_KKT}
\end{align}

\end{lem}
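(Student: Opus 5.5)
The plan is the standard convex-optimization route: necessary and sufficient first-order optimality conditions for a concave functional maximized over a convex, weakly compact set of probability measures. Let $\mathcal{P}_A$ denote the set of distributions supported on $[-A,A]$. It is convex, and by Prokhorov's theorem it is weakly compact.

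\textbf{Step 1: existence and concavity.} The map $P_X\mapsto I(X;Y)=h(Y)-h(Z)$ is concave, because $P_Y=P_X*\phi$ is linear in $P_X$ and differential entropy is concave in the density. It is also weakly continuous on $\mathcal{P}_A$. Indeed, the output densities $\int\phi(y-x)\,\rmd P_X(x)$ are uniformly bounded above. Their second moments are at most $A^2+1$, and they admit a uniform Gaussian-type lower bound $\phi(|y|+A)$. Dominated convergence therefore gives convergence of $h(Y)$. Hence a maximizer $P_{X^\star}$ exists, and $P_{Y^\star}$ is well defined.

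\textbf{Step 2: directional derivative.} Fix any $P\in\mathcal{P}_A$ and set $P_\theta=(1-\theta)P_{X^\star}+\theta P$. I will show that
\begin{equation}
\lim_{\theta\downarrow 0}\frac{I(P_\theta)-I(P_{X^\star})}{\theta}=\int \sfD(P_{Y|X}(\cdot|x)\|P_{Y^\star})\,\rmd P(x)-I(P_{X^\star}).
\end{equation}
The approach is to write $I(P_\theta)=\int \sfD(P_{Y|X}(\cdot|x)\|P_{Y_\theta})\,\rmd P_\theta(x)$ and differentiate $-\int f_\theta\log f_\theta$ in $\theta$. Interchanging limit and integral is the main technical obstacle. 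I would justify it with the convexity of $t\mapsto t\log t$, which makes the difference quotients monotone. The logarithmic bound $|\log f_\theta(y)|\le c_1+c_2(|y|+A)^2$ then supplies integrable dominating functions against Gaussian tails.

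\textbf{Step 3: optimality inequality.} Optimality forces the directional derivative to be $\le 0$ for every $P$. Taking $P=\delta_x$ with $x\in[-A,A]$ gives $\sfD(P_{Y|X}(\cdot|x)\|P_{Y^\star})\le C(A)$, which is \eqref{eq:KKT_upper_bound}.

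\textbf{Step 4: equality on the support.} We have $\int G_A\,\rmd P_{X^\star}=I(X^\star;Y^\star)-C(A)=0$, while $G_A\le 0$ on $[-A,A]$. Hence $G_A=0$ for $P_{X^\star}$-almost every $x$. The function $x\mapsto \sfD(P_{Y|X}(\cdot|x)\|P_{Y^\star})=-\int\phi(y-x)\log f_{Y^\star}(y)\,\rmd y-h(Z)$ is continuous (in fact analytic) in $x$, again by dominated convergence using the quadratic bound on $\log f_{Y^\star}$. Take any $x_0\in\supp(P_{X^\star})$. Every neighborhood of $x_0$ has positive mass, so it contains points where $G_A=0$, and continuity then gives $G_A(x_0)=0$. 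This is \eqref{eq:equality_eq_in_KKT}.

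I expect Step 2, the rigorous exchange of limit and integral, to be the main obstacle. The remaining steps are routine once continuity of $G_A$ and the Gaussian bounds on $f_{Y^\star}$ are in place.
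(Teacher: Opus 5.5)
The paper gives no proof of its own and simply quotes this lemma from Smith \cite{smith1971information} (Cor.~1) and \cite{CISSdytso2018} (Thm.~10). Your argument is precisely the standard derivation behind those references, and it is correct: concavity of $I$ on the convex set $\mathcal{P}_A$, the one-sided directional (weak) derivative at $P_{X^\star}$ justified via monotone convex difference quotients, testing with point masses $\delta_x$, and then $\int G_A\,\rmd P_{X^\star}=0$ together with continuity of $G_A$ to get equality on the support. One minor point: in Step~1, dominating $|f\log f|$ uniformly in $P_X$ needs the Gaussian upper tail $f(y)\le\phi(|y|-A)$ for $|y|\ge A$, not just a uniform constant bound. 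Step~1 is superfluous anyway, since the lemma presupposes the capacity-achieving distribution.
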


Lemma~\ref{lem:KKT} is the key to studying the support of $P_{X^\star}$. The main observation is that support points are related to zeros and maxima of the KKT function $G_A$. In particular, Lemma~\ref{lem:KKT} implies that
\begin{equation}
\supp(P_{X^\star})
\subseteq
\left \{x\in[-A,A]:\:G_A(x)=0\right \}.
\end{equation}
Consequently, studying the zeros of $G_A$ provides a natural way to investigate the support of $P_{X^\star}$. This connection was first observed by Smith in~\cite{smith1971information} and has since been exploited in \cite{dytso2019capacity} and related works to derive bounds on the cardinality of $\supp(P_{X^\star})$. Fig.~\ref{fig:GA_evolution} illustrates the behavior of $G_A$ vs.~$x$ for several values of $A$. 
Thus, to understand how $\supp(P_{X^\star})$ evolves with $A$, we must examine how the zeros of $G_A$ evolve with $A$. In the next subsection, we introduce the terminology and definitions concerning zeros that will be used throughout the paper.

\begin{rem}
    The KKT conditions in Lemma~\ref{lem:KKT} imply that interior support points are not merely zeros of $G_A$, but also local maxima. Most previous analyses exploit only the zero condition. A notable exception is \cite{Binomial_ISIT}, where the local maximality property was used to sharpen Witsenhausen's bound on the number of support points \cite{WitsenhausenBOund}. In this work, this property plays a central role in characterizing the  multiplicity of interior zeros.
\end{rem}

\begin{figure}
\begin{subfigure}[t]{0.46\textwidth}
    \centering
    \input{G_3d8}
    \caption{KKT function for $A=3.8$.}
    \label{fig:G_3d8}
\end{subfigure}
\hfill
\begin{subfigure}[t]{0.46\textwidth}
    \centering
    \input{G_6}
    \caption{KKT function for $A=6$.}
    \label{fig:G_6}
\end{subfigure}

\par\medskip

\noindent\makebox[\linewidth][c]{%
    \begin{subfigure}[t]{0.46\linewidth}
        \centering
        \input{G_7d8}
        \caption{KKT function for $A=7.8$.}
        \label{fig:G_7d8}
    \end{subfigure}%
}
\caption{Evolution of the KKT function $G_A$ for different values of the amplitude constraint $A$.}
\label{fig:GA_evolution}
\end{figure}

Before concluding this section, we present some properties of $G_A$, which will be useful in our study of multiplicity. First, we characterize the structure of the derivatives of $G_A$.
\begin{lem}\label{lem:Derivative} For $A >0$ and $x \in \bbR,$
\begin{align}
    G'_A(x) &= x -\bbE \left[ \bbE[X^\star|Y^\star=x+Z] \right]  , \\
     G''_A(x) &= 1 -  \bbE \left[ {\rm Var}(X^\star|Y^\star=x+Z) \right], \\
     G^{(k)}_A(x) &=  -  \bbE \left[ \kappa_k(X^\star|Y^\star=x+Z) \right], \quad k \ge 3,
\end{align}
    where $\kappa_k(X^\star|Y^\star=y)$ is the conditional cumulant of order $k$. 
\end{lem}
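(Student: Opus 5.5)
Write $G_A$ explicitly as an integral against the Gaussian, then differentiate under the integral and recognize the resulting expressions as posterior moments and cumulants. Since $P_{Y|X}(\cdot|x)$ is $\mathcal N(x,1)$ and $P_{Y^\star}$ has density $f_{Y^\star}(y)=\bbE[\phi(y-X^\star)]$, we get
\begin{equation*}
G_A(x) = -\tfrac12\log(2\pi\rme) - \bbE\left[\log f_{Y^\star}(x+Z)\right] - C(A).
\end{equation*}
Only the middle term depends on $x$. So the whole task is to compute the derivatives of $x\mapsto -\bbE[\log f_{Y^\star}(x+Z)]$.

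First I justify the interchange of derivative and expectation. Since $|X^\star|\le A$, we have
\begin{equation*}
\log f_{Y^\star}(y)\ge \log\phi(|y|+A),
\end{equation*}
so $|\log f_{Y^\star}(y)|$ grows at most quadratically. Every derivative of $\log f_{Y^\star}$ is a polynomial in the posterior moments of a variable supported on $[-A,A]$, so each is bounded uniformly in $y$. Combined with Gaussian integrability, dominated convergence lets me pass every derivative in $x$ inside the expectation. (Alternatively, write $\bbE[g(x+Z)]=\int g(y)\phi(y-x)\,\rmd y$ and differentiate the kernel.) Hence
\begin{equation*}
G_A^{(k)}(x)=-\bbE\left[(\log f_{Y^\star})^{(k)}(x+Z)\right].
\end{equation*}

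Next, the key identity is the Tweedie / exponential-family structure of $\log f_{Y^\star}$. Write
\begin{equation*}
f_{Y^\star}(y)=\phi(y)\,\bbE\left[\rme^{yX^\star - X^{\star 2}/2}\right].
\end{equation*}
Then $\log f_{Y^\star}(y) = -\tfrac{y^2}{2}-\tfrac12\log(2\pi) + \Lambda(y)$, where $\Lambda(y)=\log \bbE[\rme^{yX^\star}\rme^{-X^{\star2}/2}]$ is the log-partition function of the exponential family with base measure $\rme^{-x^2/2}P_{X^\star}(\rmd x)$. The tilted law of this family at parameter $y$ is exactly the posterior $P_{X^\star|Y^\star=y}$. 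Standard exponential-family facts then give $\Lambda^{(k)}(y)=\kappa_k(X^\star|Y^\star=y)$ for all $k\ge1$: the conditional mean for $k=1$, the conditional variance for $k=2$, and higher cumulants in general. Consequently
\begin{align*}
(\log f_{Y^\star})'(y) &= -y+\bbE[X^\star|Y^\star=y],\\
(\log f_{Y^\star})''(y) &= -1+{\rm Var}(X^\star|Y^\star=y),\\
(\log f_{Y^\star})^{(k)}(y) &= \kappa_k(X^\star|Y^\star=y),\quad k\ge3.
\end{align*}

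Finally, plug these in with $y=x+Z$ and take expectations. For $k=1$, $\bbE[x+Z]=x$, which gives $G_A'(x)=x-\bbE[\bbE[X^\star|Y^\star=x+Z]]$. The case $k=2$ gives the stated $G_A''$, and $k\ge3$ gives the cumulant formula. The only delicate step is the domination argument. The key observation there is that cumulants of a $[-A,A]$-valued variable are bounded by constants depending only on $k$ and $A$, which makes it routine.
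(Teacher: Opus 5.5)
Your proof is correct and follows the paper's route: write $G_A(x)=-h(Z)-C(A)-\bbE[\log f_{Y^\star}(x+Z)]$, pass the derivatives inside the expectation by domination, and apply $(\log f_{Y^\star})^{(k)}(y)=\kappa_k(X^\star|Y^\star=y)$, with the extra $-y$ and $-1$ terms for $k=1,2$. The paper cites these identities and the growth bound from the conditional-cumulant literature, whereas you derive them yourself from the log-partition function of the tilted family. One small slip: the first derivative $-y+\bbE[X^\star|Y^\star=y]$ is not uniformly bounded but grows linearly, which is still enough for the Gaussian domination (the paper uses exactly such a polynomial-growth bound).
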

\begin{proof} We begin by recalling the following result  \cite{dytso2022conditional}: for $y \in \bbR$
\begin{subequations}
    \begin{align}
 \frac{\rmd}{\rmd y }\log f_{Y}(y) &= \bbE[X|Y=y] -y, \\
  \frac{\rmd^2}{\rmd y^2 }\log f_{Y}(y) &= {\rm Var}(X|Y) -1 ,\\
  \frac{\rmd^k}{\rmd y^k }\log f_{Y}(y) &= \kappa_k(X|Y=y), \, k \ge 3.
\end{align}
\label{eq:derivatives}
\end{subequations}

Next, for $x \in \bbR$ note that
    \begin{align}
        \frac{\rmd^k}{\rmd x^k} G_A(x) &= \frac{\rmd^k}{\rmd x^k} \left( -\bbE \left[ \log f_{Y^\star}(x+Z)\right] - h(Z) - C(A)\right) \\
        &= - \bbE \left[  \frac{\rmd^k}{\rmd x^k}\log f_{Y^\star}(x+Z)\right] , \label{eq:der_G_ins}
        \end{align}
    where the interchange of expectation and derivative follows from Leibniz rule which can be invoked since for inputs  $X$ that  have bounded support, we have that \cite[Prop.~9]{dytso2022conditional}
    \begin{equation}
        \left|  \frac{\rmd^k}{\rmd y^k}\log f_{Y}(y)\right|  \le a_k |y|^k+ b_k, \label{eq:bound_derivative_log}
    \end{equation}
    for some positive constants $a_k,b_k$. The proof is concluded by applying \eqref{eq:derivatives} to \eqref{eq:der_G_ins}.
\end{proof}

Finally, we recall the following result of Smith \cite[App.~B]{smith1969Thesis} (see also \cite[Prop.~3]{tchamkerten2004discreteness}), which establishes that $G_A$ admits a well-defined extension to the complex plane.

\begin{lem}
\label{lem:entire_extension_GA}
Fix $A>0$. The function $G_A:\bbR\to\bbR$ admits a unique entire
extension to $\bbC$, which we also denote by $G_A$. This extension is
given by
\begin{equation}
    G_A(z)
    =
    -h(Z)-C(A)
    -
    \int_{\bbR}
        \phi(y-z)\log f_{Y}^\star(y)
    \rmd y,
    \qquad z\in\bbC,
    \label{eq:complex_extension_GA}
\end{equation}
where $\phi$ denotes the entire extension of the standard Gaussian
density:
\begin{equation}
    \phi(w)
    =
    \frac{1}{\sqrt{2\pi}}
    \exp\left(-\frac{w^2}{2}\right),
    \qquad w\in\bbC.
\end{equation}
The integral in \eqref{eq:complex_extension_GA} converges absolutely
for every $z\in\bbC$ and uniformly on every compact subset of $\bbC$.
\end{lem}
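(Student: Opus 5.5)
The plan is to prove that the integral in \eqref{eq:complex_extension_GA} converges absolutely and locally uniformly, and then deduce that it defines the unique entire extension of $G_A$.

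\emph{Step 1 (real formula).} For real $x$ we have $P_{Y|X}(\cdot|x)=\phi(\cdot-x)$. Expanding the relative entropy gives
\begin{equation*}
G_A(x)=-h(Z)-C(A)-\int_{\bbR}\phi(y-x)\log f_{Y^\star}(y)\,\rmd y ,
\end{equation*}
so the formula agrees with $G_A$ on $\bbR$.

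\emph{Step 2 (growth of the log-density).} Since $|X^\star|\le A$, the output density $f_{Y^\star}(y)=\bbE[\phi(y-X^\star)]$ is sandwiched between $\phi(|y|+A)$ and $\phi(|y|-A)$, the latter read as $(2\pi)^{-1/2}$ when $|y|\le A$. Consequently
\begin{equation*}
|\log f_{Y^\star}(y)|\le \tfrac12 \log(2\pi)+\tfrac12(|y|+A)^2 .
\end{equation*}
Alternatively, one can use the bound \eqref{eq:bound_derivative_log} with $k=0$ after integrating.

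\emph{Step 3 (domination of the Gaussian kernel).} For $z=a+\rmi b$ in a compact set $K$,
\begin{equation*}
|\phi(y-z)|=\frac{1}{\sqrt{2\pi}}\,\rme^{-((y-a)^2-b^2)/2}\le \frac{1}{\sqrt{2\pi}}\,\rme^{R^2/2}\,\rme^{-(y-a)^2/2},
\end{equation*}
where $R=\sup_K|z|$. Because $|a|\le R$, we can further bound $\rme^{-(y-a)^2/2}\le \rme^{-y^2/4+R^2/2}$. Hence $|\phi(y-z)\log f_{Y^\star}(y)|$ is bounded by the integrable function $C_K(1+y^2)\rme^{-y^2/4}$, uniformly in $z\in K$. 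This gives absolute convergence, and the Weierstrass M-test gives uniform convergence on $K$.

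\emph{Step 4 (holomorphy).} For each fixed $y$, the integrand is entire in $z$. Truncate the integral to $|y|\le n$; each truncation is holomorphic, for instance by Morera's theorem combined with Fubini. By Step 3 the truncations converge uniformly on compacts, so the limit is holomorphic by Weierstrass' theorem. An equivalent route is to apply Morera directly, justifying Fubini with the dominating function.

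\emph{Step 5 (uniqueness).} Any two entire functions that agree on $\bbR$ coincide, because $\bbR$ has accumulation points and the identity theorem applies.

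The only point that needs care is the uniform domination in Step 3. The complex shift $b$ multiplies the kernel by the factor $\rme^{b^2/2}$, which does not depend on $y$. The Gaussian decay in $y$ therefore survives and overwhelms the quadratic growth of $\log f_{Y^\star}$ obtained in Step 2.
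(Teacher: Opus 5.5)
Your proof is correct. The sandwich $\phi(|y|+A)\le f_{Y^\star}(y)\le 1/\sqrt{2\pi}$, the compact-set kernel bound $|\phi(y-z)|\le (2\pi)^{-1/2}\rme^{|z|^2/2}\rme^{-y^2/4}$, and the truncation/Weierstrass plus identity-theorem conclusion are exactly what is needed. The paper only cites Smith for this lemma, but its appendix proof of holomorphic continuity of $G_A$ uses the same domination you do, so your argument is essentially the paper's approach.
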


 We will also use the fact that $G_A$ is not identically zero. Indeed,
since $|X^\star|\leq A$ a.s.,
\begin{equation}
\mathbb E[X^\star\mid Y^\star]\leq A
\qquad\text{a.s.}
\end{equation}
Hence, by Lemma~\ref{lem:Derivative}, for every $x>A$,
\begin{equation}
G_A'(x)
=
x-\mathbb E\left[
\mathbb E[X^\star\mid Y^\star=x+Z]
\right]
\geq x-A>0.
\end{equation}
Thus, $G_A\not\equiv0$. Since $G_A$ is entire by
Lemma~\ref{lem:entire_extension_GA}, its zeros are isolated and,
consequently, every zero of $G_A$ has finite multiplicity.

\subsection{Classification of Zeros of $G_A$}
\label{sec:classification_of_zeros}
Our goal is to study properties of zeros of  $G_A$. Here we introduce a few important definitions that we will use.  We begin by recalling the definition of multiplicity of a zero. 
\begin{defi} Let $f$ be a real analytic function around $x_0$.  We say that the zero $x_0$ has \emph{multiplicity} $m$, or \emph{order} $m$, if
\begin{equation}
    f(x_0)= f'(x_0)= \ldots f^{(m-1)}(x_0) =0, \text{ and } f^{(m)}(x_0) \neq 0.
\end{equation}
We use the notation 
\begin{equation}
    {\rm ord}_{x_0} (f) = m. 
\end{equation}
if the zero $x_0$ has multiplicity $m$.  Additionally, 
\begin{equation}
    N_{\text{mul}}(\cI ,f)
\end{equation}
denotes the number of zeros of $f$, on an interval $\cI$, counted with multiplicity.  
    
\end{defi}

Another quantity that we will need is the number of sign changes. 
\begin{defi}  The number of sign changes of a function $\xi: \mathcal{X} \to \mathbb{R}$ is given by 
\begin{equation}
  \scrS(\xi) = \sup_{m\in \bbN } \left\{  \sup_{  \substack{y_1,\dots,y_m \in \mathcal{X} \\
  y_1< \cdots< y_m }  } \scrN \{ \xi (y_i) \}_{i=1}^m\right\} \text{,}
\end{equation}
where  $\scrN\{ \xi (y_i) \}_{i=1}^m$ is the number of sign changes  of the sequence $\{ \xi (y_i) \}_{i=1}^m $.
\end{defi} 

In this work, we classify the interior zeros of $G_A$ into three categories, as defined below.

\begin{defi}
The set of \emph{interior degenerate support points} is defined as
\begin{equation}
\cD_A
=
\left \{
x\in\supp(P_{X^\star})\cap(-A,A):\:
G_A''(x)=0
\right \}.
\end{equation}
The set of \emph{inactive contact points} is defined as
\begin{equation}
\cI_A
=
\left\{
x\in(-A,A)\setminus\supp(P_{X^\star}):\:
G_A(x)=0
\right\}. \label{eq:def_I_A}
\end{equation}
Thus, an inactive contact point is an interior point at which the KKT inequality holds with equality but to which the input distribution assigns no mass. Finally, an interior support point
\begin{equation}
x\in
\bigl(\supp(P_{X^\star})\cap(-A,A)\bigr)\setminus\cD_A
\end{equation}
is called a \emph{nondegenerate support point}. Equivalently, such a point satisfies $G_A''(x)\neq0$ and, in fact, $G_A''(x)<0$.
\end{defi}

\subsection{Zero-Counting Techniques and Stability of Zeros}
\label{sec:zero_counting_stability}

We now review some of the zero-bounding techniques that we will need.  

We begin by stating Karlin's oscillation theorem. The following theorem, shown in \cite[Thm.~3]{karlin1957polya} (see also  \cite[Thm.~3.1,~p.~21]{KarlinBook1968}), will be a key step in the proof of the upper bound on the number of mass points.

\begin{thm}
\label{thm:karlins_thm}
Let $f:\bbR\to\bbR$ have a finite number of sign changes, with
\begin{equation}
    \scrS(f)=n<\infty.
\end{equation}
Suppose that the Gaussian transform
\begin{equation}
    F(y)
    =
    \int_{\bbR}
        f(x)
        \exp\left(-\frac{(y-x)^2}{2}\right)
        \rmd x
\end{equation}
is well defined, belongs to $C^n(\bbR)$, and is not identically zero. Then
\begin{equation}
    N_{\mathrm{mul}}(\bbR,F)
    \leq
    \scrS(f).
\end{equation}
If $f$ is continuous and its zeros have well-defined finite
multiplicities, then
\begin{equation}
    \scrS(f)
    \leq
    N(\bbR,f)
    \leq
    N_{\mathrm{mul}}(\bbR,f),\label{eq:Karlins_bound}
\end{equation}
 where $N(\bbR,f)$ is the number of distinct zeros of $f$ in $\bbR$.
\end{thm}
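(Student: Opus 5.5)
The plan is to reduce the Gaussian transform to a two-sided Laplace transform and then induct on $n=\scrS(f)$, removing one sign change at a time with a Rolle-type argument. Expanding the kernel gives $\exp(-(y-x)^2/2)=\rme^{-y^2/2}\rme^{xy}\rme^{-x^2/2}$. Hence
\begin{equation}
\rme^{y^2/2}F(y)=L(y):=\int_{\bbR} g(x)\rme^{xy}\,\rmd x,\qquad g(x)=f(x)\rme^{-x^2/2}.
\end{equation}
Multiplying by the nonvanishing smooth factor $\rme^{y^2/2}$ preserves zeros and their multiplicities, since the Leibniz rule shows that $F^{(j)}(y_0)=0$ for $j<m$ if and only if $L^{(j)}(y_0)=0$ for $j<m$. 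Also, $g$ has exactly the same sign pattern as $f$. It therefore suffices to prove $N_{\mathrm{mul}}(\bbR,L)\le \scrS(g)$.

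Since $\scrS(g)=n$, I choose points $x_1<\dots<x_n$ such that $g$ is (weakly) of constant, alternating sign on the intervals $(-\infty,x_1),(x_1,x_2),\dots,(x_n,\infty)$. For the base case $n=0$, $g$ has one sign and is not a.e.\ zero, because $F\not\equiv0$. Then $L(y)=\int g\rme^{xy}\,\rmd x$ is strictly positive or strictly negative for every $y$, so $L$ has no zeros.

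For the inductive step, set $\tilde L(y)=\rme^{-x_1y}L(y)$. Again $\tilde L$ has the same zeros with the same multiplicities as $L$. Differentiating under the integral sign gives
\begin{equation}
\tilde L'(y)=\int_{\bbR}(x-x_1)g(x)\rme^{(x-x_1)y}\,\rmd x .
\end{equation}
This is the Laplace transform, up to an exponential shift in $y$, of $(x-x_1)g(x)$. That function has at most $n-1$ sign changes, because the factor $(x-x_1)$ cancels the sign change at $x_1$. Rolle's theorem with multiplicities gives $N_{\mathrm{mul}}(\bbR,\tilde L')\ge N_{\mathrm{mul}}(\bbR,\tilde L)-1$: a zero of order $m$ of $\tilde L$ is a zero of order $m-1$ of $\tilde L'$, and between consecutive distinct zeros there is an additional zero of $\tilde L'$. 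Combining this with the induction hypothesis gives $N_{\mathrm{mul}}(L)\le 1+(n-1)=n$. The induction hypothesis needs $\tilde L'\not\equiv0$. This holds because $\tilde L'\equiv0$ would force $\tilde L$ to be constant, and that constant must be nonzero since $L\not\equiv0$. In that case $\tilde L$ has no zeros and the bound is trivial.

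The main technical obstacle is justifying differentiation under the integral sign $n$ times and ensuring the shifted transforms remain well defined. The hypothesis that $F$ is well defined for every $y$ means $\int|g|\rme^{xy}\,\rmd x<\infty$ for all real $y$. Convexity of $y\mapsto\rme^{xy}$ then yields local domination of $|x|^k|g(x)|\rme^{xy}$, so $L$ is in fact real analytic. This domination is what I would verify carefully. The $C^n$ hypothesis is exactly what makes the multiplicity count meaningful up to order $n$.

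For the second assertion, assume $f$ is continuous and realize $\scrS(f)=n$ by points $y_1<\dots<y_{n+1}$ with $f(y_i)f(y_{i+1})<0$. By the intermediate value theorem, each interval $(y_i,y_{i+1})$ contains a zero of $f$. These $n$ zeros lie in disjoint intervals, so they are distinct, and hence $\scrS(f)\le N(\bbR,f)$. Since each distinct zero has multiplicity at least one, $N(\bbR,f)\le N_{\mathrm{mul}}(\bbR,f)$, which gives \eqref{eq:Karlins_bound}.
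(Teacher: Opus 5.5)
Your proof is correct, but it does not follow the paper: the paper gives no proof of this statement and simply cites Karlin, where it follows from the extended total positivity of the Gaussian kernel (positivity of the determinants $\det[K(x_i,y_j)]$ and of their confluent, differentiated versions, combined with the composition formula).

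You instead use the factorization $\rme^{-(y-x)^2/2}=\rme^{-y^2/2}\rme^{xy}\rme^{-x^2/2}$ to reduce to a two-sided Laplace transform. You then run the classical Rolle-type proof of Descartes' rule of signs: multiply by $\rme^{-x_1y}$, differentiate, let the factor $(x-x_1)$ absorb one sign change, and apply Rolle with multiplicities.

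Your route buys three things:
\begin{itemize}
\item It is self-contained and elementary.
\item It gives the count with multiplicities directly; in the total-positivity framework, that part needs the extended (derivative) determinants.
\item It shows the $C^n$ hypothesis is automatic once the integral converges absolutely, since $L$ is then real analytic.
\end{itemize}
Karlin's route buys generality: it works for any extended totally positive kernel, whether or not it has an exponential factorization. It also comes with refinements such as the sign-arrangement statement in the equality case.

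In a write-up, justify two points you currently only assert.
\begin{itemize}
\item The existence of $x_1\le\cdots\le x_n$ such that $g$ is weakly of alternating sign on the complementary open intervals. This is a standard lemma, but the intervals may degenerate, and you must check that the values of $g$ at the $x_i$ create no extra sign changes in $(x-x_1)g$. They do not, so $\scrS\bigl((x-x_1)g\bigr)\le n-1$ holds.
\item You read ``well defined'' as absolute convergence. That reading is exactly what your domination argument needs, and also what lets you apply the induction hypothesis to $(x-x_1)g$.
\end{itemize}
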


Sometimes it will be convenient to count zeros of the derivative of the function, the classical tool here is Rolle's theorem, stated below in a form that accounts for multiplicities. 

\begin{lem}
\label{lem:roll_mulitplicity}
Let $f$ be a sufficiently smooth real-valued function on an interval
containing finitely many distinct zeros $x_1<x_2<\cdots<x_r$
on $\cI$ each with finite multiplicity.
Then, 
\begin{equation}
N_{\mathrm{mul}}( \cI, f')\geq N_{\mathrm{mul}}(  \cI, f)-1. \label{eq:Rolls_bound}
\end{equation}
\end{lem}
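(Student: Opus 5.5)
Between two consecutive distinct zeros of $f$ I will produce a zero of $f'$ that is distinct from the zeros of $f$. Inside each zero of $f$ I will recover multiplicity from the derivative's own vanishing.

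Let $m_i={\rm ord}_{x_i}(f)\ge 1$, so that
\begin{equation}
N_{\mathrm{mul}}(\cI,f)=\sum_{i=1}^r m_i .
\end{equation}

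\emph{Step 1: contribution at each zero of $f$.} Since $f(x_i)=\cdots=f^{(m_i-1)}(x_i)=0$ and $f^{(m_i)}(x_i)\neq 0$, the derivative satisfies $f'(x_i)=\cdots=f^{(m_i-1)}(x_i)=0$ and $(f')^{(m_i-1)}(x_i)=f^{(m_i)}(x_i)\neq0$. Hence, when $m_i\ge 2$, the point $x_i$ is a zero of $f'$ with ${\rm ord}_{x_i}(f')=m_i-1$. When $m_i=1$ it contributes nothing, which is consistent with the count $m_i-1=0$. So the points $x_1,\dots,x_r$ together contribute
\begin{equation}
\sum_{i=1}^r (m_i-1)
\end{equation}
to $N_{\mathrm{mul}}(\cI,f')$.

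\emph{Step 2: contribution between zeros of $f$.} For each $i=1,\dots,r-1$, we have $f(x_i)=f(x_{i+1})=0$, so the classical Rolle theorem gives some $\xi_i\in(x_i,x_{i+1})$ with $f'(\xi_i)=0$. Because the $x_i$ are consecutive distinct zeros, $\xi_i$ is not one of them. The open intervals $(x_i,x_{i+1})$ are disjoint, so the $\xi_i$ are pairwise distinct. Each $\xi_i$ has multiplicity at least one as a zero of $f'$. If $f'$ happens to vanish identically near $\xi_i$, then its multiplicity there is infinite and the inequality holds trivially. This gives at least $r-1$ further zeros of $f'$, counted with multiplicity.

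\emph{Step 3: summing.} The sets $\{x_i\}$ and $\{\xi_i\}$ are disjoint subsets of $\cI$, so their contributions add:
\begin{equation}
N_{\mathrm{mul}}(\cI,f')\ \ge\ \sum_{i=1}^r (m_i-1)+(r-1)=\sum_{i=1}^r m_i-1=N_{\mathrm{mul}}(\cI,f)-1,
\end{equation}
which is \eqref{eq:Rolls_bound}.

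The only delicate point is the bookkeeping. We need the Rolle points to be distinct from each other and from the $x_i$, which follows from their lying strictly inside disjoint open intervals between consecutive zeros. We also need multiplicities of $f'$ to be well defined, which holds as soon as $f$ is smooth enough to have derivatives up to order $\max_i m_i$. Everything else is routine.
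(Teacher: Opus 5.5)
Your proof is correct and is essentially the paper's own argument in the appendix: each zero $x_i$ of $f$ of order $m_i$ is a zero of $f'$ of order $m_i-1$, and Rolle's theorem on the disjoint open intervals $(x_i,x_{i+1})$ gives $r-1$ further distinct zeros of $f'$, so $N_{\mathrm{mul}}(\cI,f')\geq\sum_{i=1}^r(m_i-1)+(r-1)=N_{\mathrm{mul}}(\cI,f)-1$. Your remark about $f'$ possibly vanishing identically near a Rolle point is a harmless extra detail that the paper leaves implicit.
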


\begin{proof} The proof is given in \cite[Section~2.3, Exercise~21, p.~87]{TrenchRealAnalysis}. For completeness, we also provide the proof in Appendix~\ref{app:lem:roll_mulitplicity}. 

\end{proof}

We will also need the following results about zeros of exponential polynomials, which can be found in \cite[Prop.~28]{kounchev2021error}. 
\begin{lem}
Let
\begin{equation}
    f(t)=\sum_{k=1}^n P_k(t)e^{\lambda_k t},
\end{equation}
where $\lambda_1,\dots,\lambda_n\in\mathbb{R}$ are pairwise distinct and
each $P_k$ is a polynomial of degree $d_k$.

Then, $f$ is an exponential polynomial of order $n$, and the total number
of real zeros of $f$, counted with multiplicity, satisfies
\begin{equation}
    N_{\mathrm{mul}}(\bbR, f)
    \leq
    \sum_{k=1}^n (d_k+1) - 1. \label{Eq:expon_polynomials_bound}
\end{equation}

\end{lem}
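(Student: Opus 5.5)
The plan is to reduce the statement to the classical Descartes-type count for exponential sums and then absorb the polynomial factors by induction on the total "degree budget"
\begin{equation}
D(f)=\sum_{k=1}^n (d_k+1).
\end{equation}
The base case is $D=1$. Then $n=1$ and $d_1=0$, so $f=c\,e^{\lambda_1 t}$ with $c\neq 0$. This function has no real zeros, which matches the bound $D-1=0$. Throughout we assume $f\not\equiv 0$, which holds because exponential polynomials with distinct exponents and nonzero polynomial coefficients are linearly independent. Consequently every real zero is isolated and has finite multiplicity, since $f$ is entire.

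For the inductive step, take $f$ with $D(f)=D\ge 2$ and write $g(t)=e^{-\lambda_1 t}f(t)$. Multiplying by the nonvanishing function $e^{-\lambda_1 t}$ preserves zeros together with their multiplicities, by the Leibniz rule. The new function is
\begin{equation}
g(t)=P_1(t)+\sum_{k=2}^n P_k(t)e^{(\lambda_k-\lambda_1)t}.
\end{equation}
Differentiate $g$ exactly once. The term $P_1$ has its degree lowered by one; if $d_1=0$, this term disappears altogether. Each remaining term becomes $\bigl(P_k'+(\lambda_k-\lambda_1)P_k\bigr)e^{(\lambda_k-\lambda_1)t}$. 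Its polynomial factor still has degree $d_k$, because $\lambda_k\neq\lambda_1$ keeps the leading coefficient nonzero. Hence $g'$ is an exponential polynomial with pairwise distinct exponents and $D(g')=D-1$. Moreover $g'\not\equiv 0$, since otherwise $g$ would be constant, which contradicts linear independence unless $n=1$ and $d_1=0$, and that is the base case.

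By the induction hypothesis, $N_{\mathrm{mul}}(\bbR,g')\le D-2$. The zeros of $g$ on any bounded interval are finite in number. If $g$ had at least $D$ zeros counted with multiplicity, we could choose a bounded interval containing $D$ of them. Applying Lemma~\ref{lem:roll_mulitplicity} on that interval gives at least $D-1$ zeros of $g'$, which is a contradiction. Therefore
\begin{equation}
N_{\mathrm{mul}}(\bbR,f)=N_{\mathrm{mul}}(\bbR,g)\le D-1,
\end{equation}
which is \eqref{Eq:expon_polynomials_bound}.

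The main care point is the degree bookkeeping. We must confirm that differentiation does not reduce the degree of $P_k$ for $k\ge 2$; it does not, because of the distinctness of the $\lambda_k$. We must also handle the possibility that $P_1$ vanishes after differentiation, in which case the order $n$ drops by one while $D$ still drops by exactly one. Beyond that, the argument only requires the non-vanishing and linear independence facts noted above, which make multiplicities finite so that Lemma~\ref{lem:roll_mulitplicity} applies.
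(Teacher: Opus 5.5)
Your proof is correct. The paper gives no proof of this lemma; it only cites \cite[Prop.~28]{kounchev2021error}, so your proposal is a self-contained proof where the paper has a citation.

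What you wrote is essentially the classical P\'olya--Szeg\H{o} argument, with one cosmetic difference. The textbook version multiplies by $e^{-\lambda_1 t}$, differentiates $d_1+1$ times at once to eliminate the first block, and inducts on the number $n$ of exponents. You differentiate once per step and induct on the budget $D=\sum_k(d_k+1)$. Your bookkeeping is right: $P_k'+(\lambda_k-\lambda_1)P_k$ keeps degree $d_k$ for $k\ge 2$, the first block either loses one degree or disappears, so $D$ drops by exactly one. Restricting to a bounded interval before invoking Lemma~\ref{lem:roll_mulitplicity} also correctly meets that lemma's finiteness hypothesis.

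The gain over the paper's route is that you reuse the paper's own Rolle-with-multiplicities lemma, so the whole zero-counting chain behind Lemma~\ref{lem:upper_bound_zeros_GA} becomes internal. Two small refinements:
\begin{itemize}
\item You can avoid citing linear independence by adding ``$f\not\equiv 0$'' to the inductive statement. The hypothesis then gives $g'\not\equiv 0$, so $g$ is nonconstant and hence $f\not\equiv 0$; the base case $c\,e^{\lambda_1 t}$ is trivially nonzero.
\item The bound tacitly requires every $P_k\not\equiv 0$, i.e., $d_k\ge 0$. You use this, and it holds in the paper's application, where the coefficients are $a_j(x_j-y)$ with $a_j>0$.
\end{itemize}
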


Finally, we will need the following standard result known as Hurwitz’s theorem for
stability of zeros, including their multiplicities, under locally uniform convergence; see, e.g., \cite[Thm.~4]{Monard2015}.
\begin{lem} \label{lem:stability_lemma}
Let $z_0\in\mathbb C$ and $r>0$. Suppose that $f_n$ and $f$ are
holomorphic on an open set containing the closed disk
\begin{equation}
\overline{B(z_0,r)}
=
\left\{
z\in\mathbb C:
|z-z_0|\leq r
\right\},
\end{equation}
and that
\begin{equation}
f_n \to  f
\end{equation}
uniformly on $\overline{B(z_0,r)}$.

Assume that $z_0$ is the only zero of $f$ in
$\overline{B(z_0,r)}$ and that it has multiplicity $m$. Then, for all
sufficiently large $n$, the function $f_n$ has exactly $m$ zeros in
$B(z_0,r)$, counted according to multiplicity.
\end{lem}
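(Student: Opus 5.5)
The plan is to derive the lemma from Rouché's theorem. This avoids needing uniform convergence of the derivatives $f_n'$ on the boundary circle, which the hypotheses (uniform convergence only on $\overline{B(z_0,r)}$) do not directly give.

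\emph{Step 1: a positive lower bound on the boundary.} By hypothesis, $z_0$ is the only zero of $f$ in $\overline{B(z_0,r)}$, so $f$ has no zeros on the circle $\partial B(z_0,r)=\{z:|z-z_0|=r\}$. Since $f$ is continuous and this circle is compact,
\begin{equation}
\delta=\min_{|z-z_0|=r}|f(z)|>0 .
\end{equation}

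\emph{Step 2: choosing $n$ large.} By uniform convergence on $\overline{B(z_0,r)}$, there is $n_0$ such that for all $n\ge n_0$,
\begin{equation}
\sup_{|z-z_0|\le r}|f_n(z)-f(z)|<\delta .
\end{equation}
In particular, on the circle we get the strict inequality $|f_n(z)-f(z)|<\delta\le|f(z)|$. This has two consequences. First, $f_n$ has no zeros on the circle, since $|f_n|\ge|f|-|f_n-f|>0$ there. Second, the hypothesis of Rouché's theorem holds for the pair $f$ and $f_n=f+(f_n-f)$.

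\emph{Step 3: applying Rouché.} Both functions are holomorphic on an open set containing the closed disk. Rouché's theorem therefore says that $f_n$ and $f$ have the same number of zeros in the open disk $B(z_0,r)$, counted with multiplicity. The only zero of $f$ in $\overline{B(z_0,r)}$ is $z_0$, which has multiplicity $m$, so $f$ has exactly $m$ zeros in $B(z_0,r)$. Hence $f_n$ also has exactly $m$ zeros there for every $n\ge n_0$.

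Should one prefer not to quote Rouché, the same count follows from the argument principle applied to $f_n$, $n_{\text{zeros}}(f_n)=\frac{1}{2\pi\rmi}\oint f_n'/f_n\,\rmd z$. One would work on a slightly smaller circle $|z-z_0|=r'<r$ on which $f$ has no zeros, where Cauchy estimates give $f_n'\to f'$ uniformly, so the integrals converge to that for $f$ and, being integers, eventually coincide with $m$. One then argues that no zeros of $f_n$ lie in the annulus $r'\le|z-z_0|<r$ for large $n$, using the positive lower bound on $|f|$ there.

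The only genuinely delicate step is Step 1. It is where the hypothesis that $z_0$ is the \emph{only} zero on the \emph{closed} disk is used, and it is what gives both the nonvanishing of $f_n$ on the boundary and the strict Rouché inequality. Everything after it is routine.
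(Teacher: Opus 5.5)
Your proof is correct. Step~1 uses the closed-disk hypothesis to get $\min_{|z-z_0|=r}|f(z)|>0$, uniform convergence then gives the strict Rouch\'e inequality on the boundary circle, and so $f_n$ and $f$ have the same number $m$ of zeros in $B(z_0,r)$. The paper gives no proof of its own; it quotes the lemma as Hurwitz's theorem with a reference, and your Rouch\'e argument (or the argument-principle variant you sketch) is the standard proof of that result.
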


\section{Properties of Zeros of $G_A$}
\label{sec:zeros_G_A}
In this section, we establish several structural properties of the zeros of $G_A$. In particular, the main result of this section is the complete characterization of the multiplicity of all zeros of $G_A$.

\subsection{Bounds on the Number of Zeros of $G_A$} 
\label{sec:bounds_on_number_zeros}

In this subsection, we  re-derive  the bound on the number of zeros of $G_A$   shown in \cite[Thm.~1]{dytso2019capacity}, while explicitly accounting for their multiplicities.

\begin{lem}
\label{lem:upper_bound_zeros_GA}
For every $A>0$, let $K(A)
=
\left|\supp(P_{X^\star})\right|$. Then,
\begin{equation}
N_{\mathrm{mul}}(\bbR,G_A)
\leq
2K(A).
\label{eq:upper_bound_zeros_GA_lemma}
\end{equation}
\end{lem}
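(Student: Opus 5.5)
Following \cite[Thm.~1]{dytso2019capacity}, we write $G_A$ as a Gaussian transform and apply Karlin's oscillation theorem (Theorem~\ref{thm:karlins_thm}). By Lemma~\ref{lem:entire_extension_GA},
\begin{equation}
G_A(x)=\int_{\bbR}\phi(y-x)\,g(y)\,\rmd y,\qquad g(y)=-\log f_{Y^\star}(y)-h(Z)-C(A),
\end{equation}
where we used $\int\phi(y-x)\rmd y=1$ to absorb the constants into $g$. Up to the factor $1/\sqrt{2\pi}$, this is the transform in Theorem~\ref{thm:karlins_thm} with $f=g$. The transform is well defined and smooth (indeed entire), and $G_A\not\equiv 0$, as shown after Lemma~\ref{lem:entire_extension_GA}. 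Hence
\begin{equation}
N_{\mathrm{mul}}(\bbR,G_A)\le \scrS(g).
\end{equation}
It therefore suffices to show $\scrS(g)\le 2K(A)$.

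To bound $\scrS(g)$, we use the second part of Theorem~\ref{thm:karlins_thm}, $\scrS(g)\le N(\bbR,g)$. The zeros of $g$ are the solutions of $f_{Y^\star}(y)=c$ with $c=\rme^{-h(Z)-C(A)}>0$. Since $P_{X^\star}=\sum_{i=1}^{K}p_i\delta_{x_i}$ with $K=K(A)$ atoms,
\begin{equation}
f_{Y^\star}(y)-c=\frac{\rme^{-y^2/2}}{\sqrt{2\pi}}\Bigl(\sum_{i=1}^K p_i\rme^{-x_i^2/2}\rme^{x_iy}-\sqrt{2\pi}\,c\,\rme^{y^2/2}\Bigr).
\end{equation}
The bracket is not an exponential polynomial because of the term $\rme^{y^2/2}$. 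We handle this by differentiating. Multiply by $\rme^{-y^2/2}$ to obtain
\begin{equation}
u(y)=\sum_{i=1}^K p_i\rme^{-x_i^2/2}\rme^{x_iy-y^2/2}-\sqrt{2\pi}\,c,
\end{equation}
which has the same zeros as $f_{Y^\star}-c$. Then
\begin{equation}
u'(y)=\sum_i p_i\rme^{-x_i^2/2}(x_i-y)\rme^{x_iy-y^2/2}=\rme^{-y^2/2}\sum_i p_i\rme^{-x_i^2/2}(x_i-y)\rme^{x_iy}.
\end{equation}
The last sum is an exponential polynomial with $K$ distinct exponents $\lambda_i=x_i$, each multiplied by a polynomial of degree one. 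By \eqref{Eq:expon_polynomials_bound}, it has at most $2K-1$ real zeros counted with multiplicity. Rolle's theorem with multiplicities, Lemma~\ref{lem:roll_mulitplicity}, then gives
\begin{equation}
N_{\mathrm{mul}}(\bbR,u)\le N_{\mathrm{mul}}(\bbR,u')+1\le 2K.
\end{equation}
Consequently $N(\bbR,g)\le N_{\mathrm{mul}}(\bbR,g)\le 2K$. Here we also use that the zeros of $g$ and of $u$ coincide with equal multiplicities, since $-\log$ is a local diffeomorphism near the value $c>0$.

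Some technical points need to be checked.
\begin{itemize}
\item Lemma~\ref{lem:roll_mulitplicity} requires finitely many zeros of $u$. This holds because $u\to-\sqrt{2\pi}c<0$ as $|y|\to\infty$ and $u$ is real analytic and nonconstant, so its zeros lie in a compact set and are isolated.
\item $g$ is continuous with finite-multiplicity zeros, as required by the second part of Theorem~\ref{thm:karlins_thm}.
\item The transform of $g$ is well defined because $g$ grows only quadratically.
\end{itemize}

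I expect the main obstacle to be the conditions of Theorem~\ref{thm:karlins_thm}. The most delicate step is justifying the passage from the sign-change count to the count of distinct zeros of $g$. If $g$ had no sign change at all, the bound would still hold trivially. Using $\scrS(g)\le N(\bbR,g)$ directly sidesteps any issue with zeros at which $g$ touches zero without changing sign.
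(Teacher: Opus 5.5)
Your proof is correct and follows the paper's argument. You write $G_A$ as the Gaussian transform of $\log(\kappa_A/f_{Y^\star})$, apply Karlin's theorem, reduce to $f_{Y^\star}'$ via Rolle with multiplicities, and bound its zeros as an exponential polynomial with $K(A)$ degree-one coefficients, giving $(2K(A)-1)+1=2K(A)$. The differences are only cosmetic: your $u$ is just $\sqrt{2\pi}\,(f_{Y^\star}-c)$, so the detour through the bracket is unnecessary, and you verify finiteness of its zeros directly from $f_{Y^\star}(y)\to0$ and analyticity instead of citing \cite[Lemma~2]{dytso2019capacity}.
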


\begin{proof}
Let
\begin{equation}
\supp(P_{X^\star})
=
\{x_1,\ldots,x_{K(A)}\},
\qquad
p_j=P_{X^\star}({x_j}).
\end{equation}
The induced output density can be written as
\begin{equation}
f_{Y^\star}(y)
=
\frac{\rme^{-y^2/2}}{\sqrt{2\pi}}
\sum_{j=1}^{K(A)}a_j\rme^{x_jy},
\qquad
a_j=p_j\rme^{-x_j^2/2}.
\label{eq:output_density_exponential_sum}
\end{equation}
To express the KKT function as a Gaussian transform, define
\begin{equation}
\kappa_A
=
\exp\bigl(-C(A)-h(Z)\bigr)
=
\frac{\rme^{-C(A)}}{\sqrt{2\pi e}},
\label{eq:def_kappa_A}
\end{equation}
where $h(Z)=\frac{1}{2}\log(2\pi e)$. By the definition of $G_A$,
\begin{align}
G_A(x)
&=
\sfD( P_{Y|X}(\cdot| x) \| P_{Y^\star})-C(A)\\
&=
-h(Z)-C(A)
-
\int_{\bbR}
\phi(y-x)\log f_{Y^\star}(y)\rmd y\\
&=
\int_{\bbR}
\phi(y-x)
\log\left(\frac{\kappa_A}{f_{Y^\star}(y)}\right)
\rmd y,
\label{eq:GA_gaussian_transform}
\end{align}
where
\begin{equation}
\phi(t)
=
\frac{1}{\sqrt{2\pi}}\rme^{-t^2/2}.
\end{equation}
Thus, $G_A$ is the Gaussian transform of
\begin{equation}
\xi_A(y)
=
\log\left(\frac{\kappa_A}{f_{Y^\star}(y)}\right),
\end{equation}
 and recall that $G_A \not\equiv 0$.
Since $ \xi_A  $ is continuous and   grows at most polynomially  as
$|y|\to\infty$, see \eqref{eq:bound_derivative_log},   its Gaussian transform is well defined.  Moreover, 
 since $\kappa_A\in(0,1/\sqrt{2\pi e}]$, 
\cite[Lemma~2]{dytso2019capacity} implies that
$f_{Y^\star}-\kappa_A$ has finitely many real zeros.
Consequently, $\xi_A$ has finitely many sign changes, and Karlin's
oscillation theorem (i.e., Theorem~\ref{thm:karlins_thm} can be invoked:
\begin{align}
N_{\mathrm{mul}}(\bbR,G_A)
&\leq
\scrS(\xi_A)\\
&\leq
N_{\mathrm{mul}}(\bbR,\xi_A)\\
&=
N_{\mathrm{mul}}\bigl(\bbR,f_{Y^\star}-\kappa_A\bigr).
\label{eq:using_karlin}
\end{align}
The last equality holds because $\xi_A$ and $f_{Y^\star}-\kappa_A$ have the same zeros with the same multiplicities.

Differentiating \eqref{eq:output_density_exponential_sum}, we obtain
\begin{equation}
f_{Y^\star}'(y)
=
\frac{\rme^{-y^2/2}}{\sqrt{2\pi}}Q_A(y),
\end{equation}
where
\begin{equation}
Q_A(y)
=
\sum_{j=1}^{K(A)}
a_j(x_j-y)\rme^{x_jy}.
\end{equation}
Moreover, since $f_{Y^\star}-\kappa_A$ is real analytic and has
finitely many real zeros, each of its zeros has finite multiplicity.
Therefore, Rolle's theorem with multiplicities gives
\begin{align}
N_{\mathrm{mul}}\bigl(\bbR,f_{Y^\star}-\kappa_A\bigr)
&\leq
N_{\mathrm{mul}}\bigl(\bbR,f_{Y^\star}'\bigr)+1\\
&=
N_{\mathrm{mul}}(\bbR,Q_A)+1.
\label{eq:using_rolls_multiplicyt}
\end{align}
The equality holds because the factor
$\rme^{-y^2/2}/\sqrt{2\pi}$ is strictly positive and does not affect the zeros or their multiplicities.

Finally, $Q_A$ is an exponential polynomial with $K(A)$ distinct exponents, and each polynomial coefficient $a_j(x_j-y)$ has degree one. Therefore, \eqref{Eq:expon_polynomials_bound} gives
\begin{equation}
N_{\mathrm{mul}}(\bbR,Q_A)
\leq
2K(A)-1.
\end{equation}
Combining the preceding bounds, we obtain
\begin{equation}
N_{\mathrm{mul}}(\bbR,G_A)
\leq
2K(A),
\end{equation}
which concludes the proof.
\end{proof}

\subsection{Order of Zeros of $G_A$}
\label{sec:mult_of_zeros}

In this subsection, we establish one of our main results: the origin is the only possible interior degenerate support point. We begin by characterizing the multiplicities of the different types of zeros of $G_A$.

\begin{prop} \label{prop:multiplicity} Fix some $A>0$. Then, the following properties hold.
\begin{enumerate}
   
    \item Suppose that $x_0$ is an interior zero of $G_A$. Then, there exists a positive integer $m$, such that
    \begin{equation}
        {\rm ord}_{x_0} (G_A) =2 m,
    \end{equation}
    i.e., interior zeros have even multiplicity. 
    \item $ \pm A$  are simple zeros of $G_A$, that is
    \begin{equation}
        {\rm ord}_{A} (G_A) = {\rm ord}_{-A} (G_A) =1; \label{eq:multiplicity_pm_A}
    \end{equation}
    \item if $x_0 \in \cD_A$, then
    \begin{equation}
        {\rm ord}_{x_0} (G_A)  \ge 4. \label{eq:multiplicity_of_D_A}
    \end{equation}
\item    if $x_0 \in  ( \supp(P_{X^\star}) \cap (-A,A) ) \setminus \cD_A$ (i.e., $x_0$ is an interior nondegenerate support point), then
    \begin{equation}
        {\rm ord}_{x_0} (G_A) = 2; \label{eq:mult_non_degenerate_points}
    \end{equation}
    \item  if $x_0 \in \cI_A$, then
    \begin{equation}
        {\rm ord}_{x_0} (G_A) \ge 2.  \label{eq:kissing_points}
    \end{equation}
 
\end{enumerate}

\end{prop}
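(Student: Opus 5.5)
Each item follows from the KKT conditions of Lemma~\ref{lem:KKT}, the derivative formulas of Lemma~\ref{lem:Derivative}, and the fact noted after Lemma~\ref{lem:entire_extension_GA} that every zero of $G_A$ has finite multiplicity. I will prove item~1 first and derive items~3--5 from it; item~2 is separate and is the main obstacle.

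\emph{Item 1.} Let $x_0\in(-A,A)$ with $G_A(x_0)=0$. By \eqref{eq:KKT_upper_bound}, $G_A\le 0$ on a neighbourhood of $x_0$, so $x_0$ is a local maximum of $G_A$. Let $k={\rm ord}_{x_0}(G_A)$, which is finite. Taylor's theorem gives
\begin{equation}
G_A(x)=\frac{G_A^{(k)}(x_0)}{k!}(x-x_0)^k\bigl(1+o(1)\bigr)
\end{equation}
as $x\to x_0$. If $k$ were odd, $G_A$ would take both signs near $x_0$, contradicting $G_A\le 0$. Hence $k=2m$ with $m\ge 1$, and moreover $G_A^{(2m)}(x_0)<0$.

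\emph{Items 3--5.} Item~5 is immediate from item~1, since $2m\ge 2$. For item~4, let $x_0$ be a nondegenerate interior support point. Then $G_A(x_0)=0$ by \eqref{eq:equality_eq_in_KKT}, $G_A'(x_0)=0$ because $x_0$ is an interior maximum, and $G_A''(x_0)\neq 0$ by definition. So the order is exactly $2$, and item~1 gives $G_A''(x_0)<0$, which justifies the remark in the definition. For item~3, if $x_0\in\cD_A$ then $G_A(x_0)=G_A'(x_0)=G_A''(x_0)=0$, so the order is at least $3$. Since the order is even, it is at least $4$.

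\emph{Item 2.} First, $A\in\supp(P_{X^\star})$ (Smith), so $G_A(A)=0$, and by symmetry of $P_{X^\star}$ the same holds at $-A$. It then suffices to show $G_A'(A)\neq 0$; by symmetry $G_A$ is even, so $G_A'(-A)=-G_A'(A)$ and the case $-A$ follows. Lemma~\ref{lem:Derivative} gives
\begin{equation}
G_A'(A)=A-\bbE\bigl[\bbE[X^\star\mid Y^\star=A+Z]\bigr].
\end{equation}
The Gaussian likelihood has full support, so every atom of $P_{X^\star}$ receives positive posterior weight for every output value. Since $P_{X^\star}$ has at least two atoms (e.g.\ $\pm A$), the posterior puts positive mass strictly below $A$. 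Hence $\bbE[X^\star\mid Y^\star=y]<A$ for every $y$, and therefore $G_A'(A)>0$. So $\pm A$ are simple zeros.

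The only subtle step is this strict inequality. If it failed, one could instead argue that the order at $A$ must be odd: $G_A\le 0$ just inside $[-A,A]$, and $G_A'(x)\ge x-A$ for $x>A$ gives $G_A>0$ just beyond $A$, so $G_A$ changes sign at $A$. That alone does not exclude order $3$, however, so the strict positivity of $G_A'(A)$ from the posterior argument is what the proof relies on.
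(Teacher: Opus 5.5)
Your proposal is correct and follows essentially the same route as the paper. Item~1 uses the sign-change contradiction with $G_A\le 0$, items~3--5 combine the vanishing of $G_A$, $G_A'$ (and $G_A''$) with evenness of the order, and item~2 uses Lemma~\ref{lem:Derivative} together with $\bbE[X^\star\mid Y^\star]<A$. Your posterior full-support argument for that strict inequality, and your explicit note that $G_A(\pm A)=0$ because $\pm A\in\supp(P_{X^\star})$, fill in details the paper leaves implicit.
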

\begin{proof}
    To see the first statement suppose that there exists an interior zero $x_0$ of $G_A$ that has an odd multiplicity. Then, in the vicinity of $x_0$, $G_A(x)$ would have a sign change. In other words, $x_0$ would be a zero of $G_A$  but not a maximum of $G_A$. This would contradict the KKT condition in \eqref{eq:KKT_upper_bound} (i.e., $G_A(x_0) \le 0$). 

    To see the second statement, note that by using Lemma~\ref{lem:Derivative}, and  $\bbE[X^\star|Y^\star] < A$, we have that 
    \begin{align}
        G'_A(A)  &= A - \bbE \left[ \bbE[X^\star|Y^\star=A +Z]  \right] \\
        &>0.
    \end{align}
 Similarly, using $\bbE[X^\star|Y^\star] > -A$, 
 \begin{equation}
        G'_A(-A) <0.  
    \end{equation}

 To show the third property, note that if $x_0 \in \cD_A$, then
    \begin{equation}
        G_A(x_0) = G_A'(x_0) = G_A''(x_0)=0 .
    \end{equation}
    Thus, $x_0 \in \cD_A$ has multiplicity at least three. However, by the first property, the multiplicity can only be even and, therefore, $x_0$ must have multiplicity at least four. 

    The fourth property follows since by the KKT conditions an interior nondegenerate support  point is also a maximum and by assumption $G_A''(x_0) \neq 0$. 

    The fifth property follows from the definition of $\cI_A$ in \eqref{eq:def_I_A} and the fact that multiplicity can only be even. This concludes the proof. 
\end{proof}

\begin{rem}

Claims 1, 3, 4, and 5 of Proposition~\ref{prop:multiplicity}
follow only from the KKT inequality and the analyticity of $G_A$.
Consequently, they remain valid for any channel whose KKT function is
real analytic and not identically zero. Claim~2 is also likely provable for a fairly large family of additive channels (e.g., noise distribution with log-concave probability density).  
\end{rem}

We are now ready to show the main result of this section. 
\begin{thm} Fix some $A>0$.  Then,
\begin{equation}
    \cD_A \cup \cI_A \subseteq \{ 0\}.
\end{equation}
Consequently, for every $A>0$, $|\cD_A| \le 1$.
\label{thm:size_of_degenerate_solution}
\end{thm}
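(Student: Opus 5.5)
Our approach is a zero-counting argument that plays the upper bound of Lemma~\ref{lem:upper_bound_zeros_GA} against the multiplicities in Proposition~\ref{prop:multiplicity}, using the symmetry of $P_{X^\star}$.

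\textbf{Lower bound from the support alone.} Let $K=K(A)$. The support contains $\pm A$, which are simple zeros of $G_A$. The remaining $K-2$ interior support points are zeros of multiplicity at least $2$. Hence
\begin{equation}
N_{\mathrm{mul}}(\bbR,G_A)\ \ge\ 2+2(K-2)=2K-2 .
\end{equation}
Each point of $\cD_A$ adds at least $2$ extra multiplicity, since its order is at least $4$ rather than $2$. Each point of $\cI_A$ adds at least $2$, being a new zero of order at least $2$. Comparing with the bound $2K$, we obtain
\begin{equation}
2K-2+2|\cD_A|+2|\cI_A|\le 2K,
\qquad\text{that is,}\qquad |\cD_A|+|\cI_A|\le 1 .
\end{equation}

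\textbf{Symmetry forces the exceptional point to the origin.} Since $P_{X^\star}$ is symmetric, $f_{Y^\star}$ is even, and therefore $G_A$ is even. It follows that $\cD_A$ and $\cI_A$ are each symmetric sets, with multiplicities preserved under $x\mapsto -x$. If some $x_0\neq0$ belonged to $\cD_A\cup\cI_A$, then $-x_0\neq x_0$ would also belong to it. This would give $|\cD_A|+|\cI_A|\ge 2$, contradicting the count above. Hence $\cD_A\cup\cI_A\subseteq\{0\}$, and in particular $|\cD_A|\le1$.

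\textbf{Main obstacle.} The delicate step is to check that the bookkeeping in the count is legitimate. First, the bound of Lemma~\ref{lem:upper_bound_zeros_GA} counts all real zeros, including any outside $[-A,A]$. This only helps us, because the count is an upper bound and extra zeros can be ignored. Second, we must make sure no zero is double-counted. Points of $\cI_A$ are by definition outside the support, and $\pm A$ are not interior points, so the sets being counted are disjoint. Third, the evenness of $G_A$ should be verified directly from \eqref{eq:complex_extension_GA} together with the symmetry of $f_{Y^\star}$, which itself follows from Smith's symmetry result.
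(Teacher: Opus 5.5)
Your proposal is correct and follows the paper's proof essentially step for step. You play the multiplicity lower bound $2K(A)-2+2|\cD_A|+2|\cI_A|$ from Proposition~\ref{prop:multiplicity} against the upper bound $2K(A)$ of Lemma~\ref{lem:upper_bound_zeros_GA} to get $|\cD_A|+|\cI_A|\le 1$, and then use the evenness of $G_A$ and $G_A''$, which follows from the symmetry of the unique CAID, to rule out any nonzero exceptional point.
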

\begin{proof}
Let $K(A) = | \supp(P_{X^\star})|$.  Then,  from Proposition~\ref{prop:multiplicity}, we have that  points $\pm A$ each have multiplicity one as shown in \eqref{eq:multiplicity_pm_A}, each nondegenerate interior point contributes multiplicity $2$ as shown in \eqref{eq:mult_non_degenerate_points},  points in $\cI_A$ contribute multiplicity at least two as shown in \eqref{eq:kissing_points}, and points in $\cD_A$ contribute multiplicity at least four as shown in \eqref{eq:multiplicity_of_D_A}. Therefore, 
    \begin{align}
N_{\text{mul}} (\bbR, G_A)  &\ge 2 +  2 ( K(A) - 2 -| \cD_A| )  + 4| \cD_A| + 2 |\cI_A|\\
& = 2 K(A) -2 +2 | \cD_A| + 2 |\cI_A|. \label{eq:lower_bound_num_zeros_of_n_g_a}
    \end{align}

Combining \eqref{eq:lower_bound_num_zeros_of_n_g_a} and \eqref{eq:upper_bound_zeros_GA_lemma}, we arrive at
\begin{align}
2 K(A) -2 +2 | \cD_A| + 2 |\cI_A|
\le  N_{\text{mul}}(\bbR, G_A)  \le 2 K(A)
\end{align}
which implies that
\begin{equation}
    | \cD_A| +  |\cI_A| \le 1.  \label{eq:cardinality_bound_on_D+I}
\end{equation}

Finally, since $P_{X^\star}$ is unique and symmetric, we have that  $\supp(P_{X^\star})$ is symmetric about the origin.
Moreover, $G_A$ and $G_A''$ are even functions. Consequently, both
$\cD_A$ and $\cI_A$ are symmetric, and hence
\begin{equation}
    x\in\cD_A\cup\cI_A
    \quad\Longrightarrow\quad
    -x\in\cD_A\cup\cI_A.
\end{equation}
Thus, if $\cD_A\cup\cI_A$ contained a nonzero point, it would contain
at least two distinct points, contradicting
\eqref{eq:cardinality_bound_on_D+I}. Hence,
\begin{equation}
    \cD_A\cup\cI_A\subseteq\{0\}.
\end{equation}
This concludes the proof.
\end{proof}

In Theorem~\ref{thm:size_of_degenerate_solution} we have shown that zero is the only possible degenerate support point. We now examine the multiplicity at zero more closely.  
\begin{cor}
\label{cor:order_at_origin}
Fix $A>0$. Then, the following properties hold:
\begin{itemize}
    \item If $0\notin\supp(P_{X^\star})$
        and $
        G_A(0)=0$,  then
    \begin{equation}
        {\rm ord}_0 (G_A)=2.
    \end{equation}

    \item If $   0\in\cD_A,$  then
    \begin{equation}
        {\rm ord}_0 (G_A)=4
    \end{equation}
    and
    \begin{equation}
        G_A^{(4)}(0)<0.
    \end{equation}
\end{itemize}
\end{cor}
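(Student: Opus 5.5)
The plan is to reuse the multiplicity bookkeeping from the proof of Theorem~\ref{thm:size_of_degenerate_solution}, but now keeping the exact multiplicity at the origin instead of just its lower bound. Write $K=K(A)$ and $m_0={\rm ord}_0(G_A)$, and combine the lower bound on $N_{\mathrm{mul}}(\bbR,G_A)$ with the upper bound $2K$ from Lemma~\ref{lem:upper_bound_zeros_GA}.

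For the first bullet, assume $0\notin\supp(P_{X^\star})$ and $G_A(0)=0$. Then $0\in\cI_A$, so by Theorem~\ref{thm:size_of_degenerate_solution} we have $\cD_A=\emptyset$ and $\cI_A=\{0\}$. The zero count is as follows:
\begin{itemize}
\item the two endpoints $\pm A$ are simple zeros by \eqref{eq:multiplicity_pm_A};
\item the remaining $K-2$ support points are nondegenerate, hence of order exactly $2$ by \eqref{eq:mult_non_degenerate_points};
\item the origin contributes $m_0$.
\end{itemize}
Hence
\begin{equation}
2+2(K-2)+m_0\le N_{\mathrm{mul}}(\bbR,G_A)\le 2K,
\end{equation}
so $m_0\le 2$. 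Together with \eqref{eq:kissing_points}, this gives $m_0=2$.

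For the second bullet, assume $0\in\cD_A$. Theorem~\ref{thm:size_of_degenerate_solution} then gives $\cI_A=\emptyset$ and $\cD_A=\{0\}$. The other $K-3$ interior support points have order $2$, the endpoints have order $1$, and the origin has order $m_0\ge4$ by \eqref{eq:multiplicity_of_D_A}. Thus
\begin{equation}
2+2(K-3)+m_0\le 2K,
\end{equation}
so $m_0\le 4$, and therefore $m_0=4$. In particular $G_A^{(4)}(0)\neq0$, and $G_A(x)=\tfrac{G_A^{(4)}(0)}{24}x^4+O(x^5)$ near $0$. The KKT inequality \eqref{eq:KKT_upper_bound} requires $G_A\le0$ in a neighborhood of the interior point $0$, which forces $G_A^{(4)}(0)<0$.

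The main point needing care is that the count counts each zero only once. This holds because the listed zeros are distinct, and any further real zeros (for instance outside $[-A,A]$) only increase $N_{\mathrm{mul}}$, which strengthens the inequality. One should also check that $0$ is not an endpoint and that $K\ge3$ in the second case; both hold since $0\in(-A,A)$ and $\pm A$ are always support points. No other obstacle is expected.
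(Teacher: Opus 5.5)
Your proposal is correct and follows essentially the same multiplicity-budget argument as the paper: the support points already account for $2K(A)-2$ of the at most $2K(A)$ zeros, which leaves exactly two extra units for the origin, and the sign of $G_A^{(4)}(0)$ then follows from the KKT inequality at the interior local maximum $0$. The only minor difference is that you invoke Theorem~\ref{thm:size_of_degenerate_solution} to fix the other interior multiplicities at exactly two, whereas the paper only needs the lower bound of two from Proposition~\ref{prop:multiplicity}.
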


\begin{proof}
Let $ K(A)
    =
    \left|
        \supp(P_{X^\star})
    \right|$. The proof of Theorem~\ref{thm:size_of_degenerate_solution} shows that
\begin{equation}
    N_{\mathrm{mul}}(\bbR,G_A)
    \leq
    2K(A).
    \label{eq:cor_zero_upper_bound}
\end{equation}
On the other hand, the two boundary support points contribute
multiplicity two in total, while the $K(A)-2$ interior support points
contribute multiplicity at least two each. Thus, before accounting for
any inactive contact point or additional degeneracy, the support points
already contribute at least
\begin{equation}
    2+2\bigl(K(A)-2\bigr)
    =
    2K(A)-2,
    \label{eq:cor_baseline_multiplicity}
\end{equation}
to $N_{\mathrm{mul}}(\bbR,G_A)$. Hence, only two additional units of
multiplicity are available. 

Suppose first that
\begin{equation}
    0\notin\supp(P_{X^\star})
    \qquad\text{and}\qquad
    G_A(0)=0.
\end{equation}
Since $x=0$ is an interior zero of $G_A$, Proposition~\ref{prop:multiplicity}
implies that its multiplicity is even and at least two. Since $0\notin\supp(P_{X^\star})$, this zero is
not included in the support contribution
\eqref{eq:cor_baseline_multiplicity}. Since only two additional units
of multiplicity are available, we obtain
\begin{equation}
    {\rm ord}_0 (G_A)=2.
\end{equation}

Now, suppose that 
\begin{equation}
    0\in\cD_A.
\end{equation}
In the baseline count \eqref{eq:cor_baseline_multiplicity}, the origin has
already been counted as an interior support point of multiplicity two.
By Proposition~\ref{prop:multiplicity}, degeneracy implies
\begin{equation}
    {\rm ord}_0 (G_A)\geq4.
\end{equation}
Thus, degeneracy requires at least two additional units of multiplicity
beyond the baseline contribution. On the other hand, \eqref{eq:cor_zero_upper_bound} and \eqref{eq:cor_baseline_multiplicity} show that at most two additional units of multiplicity are available in total. Therefore, the origin must contribute exactly two additional units beyond its baseline multiplicity two, and hence
\begin{equation}
    {\rm ord}_0 (G_A)=4.
\end{equation}

Finally, since, in view of \eqref{eq:KKT_upper_bound}, $G_A(x)\leq0$ on $[-A,A]$ and $G_A(0)=0$, the origin is
a local maximum of $G_A$. The order-four Taylor expansion gives
\begin{equation}
    G_A(x)
    =
    \frac{G_A^{(4)}(0)}{4!}x^4
    +
    o(x^4),
    \qquad
    x\to0.
\end{equation}
Because $G_A(x)\leq0$ in a neighborhood of zero and
$G_A^{(4)}(0)\neq0$, we must have
\begin{equation}
    G_A^{(4)}(0)<0.
\end{equation}
This concludes the proof. 
\end{proof}

Corollary~\ref{cor:order_at_origin} already strongly constrains the
possible local configurations at the origin. 
If the origin is an inactive contact point, then it is a zero of multiplicity two; if the origin is a degenerate support point, then it is a zero of multiplicity four. These multiplicities suggest that, under a sufficiently small perturbation of $A$, at most one or two support points, respectively, can occur near the origin. Together with symmetry of the CAID, the
corresponding local configurations would consist of either the origin itself or a symmetric pair ${\pm x_A}$. To make these conclusions rigorous, however, we must first establish continuity of the KKT function with respect to $A$ and then invoke stability of zeros under
holomorphic perturbations. This is done in the next  section.

Before concluding this section we demonstrate some possible transitions and values of the derivatives. Figure~\ref{fig:appearance_transition} numerically illustrates the
appearance of a new support point at the origin. For $A=A_-$, the KKT
inequality is strict in a neighborhood of the origin, and in particular
$G_{A_-}(0)<0$. At the critical amplitude $A=A_0$, the KKT function first
touches zero at the origin, while the capacity-achieving input assigns no
mass to this point. Thus, the origin is an inactive contact point. Since
\begin{equation}
    G_{A_0}''(0)<0,
\end{equation}
this contact has multiplicity two. For $A=A_+$, the origin becomes an
active support point, indicated by the filled circle. Notice that the
second derivative remains strictly negative throughout this transition,
so the newly appearing support point is nondegenerate.

Figure~\ref{fig:splitting_transition} numerically illustrates the second
possible transition mechanism. For $A=A_-$, the origin is a nondegenerate
support point, with
\begin{equation}
    G_{A_-}''(0)<0.
\end{equation}
At the critical amplitude $A=A_0$, the curvature at the origin vanishes:
\begin{equation}
    G_{A_0}''(0)=0.
\end{equation}
At the same time, the fourth derivative is strictly negative,
\begin{equation}
    G_{A_0}^{(4)}(0)<0,
\end{equation}
which is consistent with the origin being a zero of multiplicity four.
For $A=A_+$, the contact at the origin disappears and is replaced by two
symmetric nondegenerate support points. Thus, the support cardinality
increases by exactly one across this transition.

\begin{figure}
\begin{subfigure}[t]{\textwidth}
    \centering
\input{apperance}
\caption{KKT function.}
\end{subfigure}
\begin{subfigure}[t]{\textwidth}
    \centering
\input{apperance_G2}
\caption{Second derivative of the KKT function.}
\end{subfigure}
\begin{subfigure}[t]{\textwidth}
    \centering
\input{apperance_G4}
\caption{Fourth derivative of the KKT function.}
\end{subfigure}
\caption{Numerical illustration of the appearance of a new support point
at the origin near the critical amplitude $A_0\approx1.6659$.}
\label{fig:appearance_transition}
\end{figure}

\begin{figure}
\begin{subfigure}[t]{\textwidth}
    \centering
\input{Split}
\caption{KKT function.}
\end{subfigure}
\begin{subfigure}[t]{\textwidth}
    \centering
\input{Split_G2}
\caption{Second derivative of the KKT function.}
\end{subfigure}
\begin{subfigure}[t]{\textwidth}
    \centering
\input{Split_G4}
\caption{Fourth derivative of the KKT function.}
\end{subfigure}
\caption{Numerical illustration of the splitting of the support point at
the origin into two symmetric support points near the critical amplitude
$A_0\approx2.9075$.}
\label{fig:splitting_transition}

\end{figure}

\section{Transitions of the Support Points}
\label{sec:transition_points}

In this section, we study how the support points behave as we change $A$. We make the dependence on the amplitude constraint explicit.
For each $A>0$, let $P_A^\star$ denote the CAID under the constraint $|X|\leq A$, and let
$X_A^\star\sim P_A^\star$. We write
\begin{equation}
    Y_A^\star = X_A^\star + Z
\end{equation}
and denote the corresponding output distribution and density by
$P_{Y,A}^\star$ and $f_{Y,A}^\star$, respectively. When the amplitude
$A$ is fixed and clear from the context, we suppress this dependence and
write $X^\star$, $Y^\star$, $P_{X^\star}$, and $f_{Y^\star}$.

\subsection{Local Stability of Nondegenerate Support Points}
\label{sec:local_stability_of_non_degenerate}

  In this subsection, we show that nondegenerate support points are locally stable under changes in the amplitude constraint. In particular, each such point persists as a unique nearby support point whose location and probability mass vary continuously with $A$. We start with the following definitions. 

  \begin{defi}[Weak Convergence] A sequence of probability measures $\{P_n\}$ on $\mathbb R$ converges weakly to a probability measure $P$, denoted by \begin{equation} P_n \Rightarrow P, \end{equation} if \begin{equation} \int_{\mathbb R} g(x)\,P_n(\rmd x) \to  \int_{\mathbb R} g(x)\,P(\rmd x) \end{equation} 
  for every bounded continuous function $g:\mathbb R\to\mathbb R$.
  \end{defi}

    \begin{defi}[Locally Uniform Convergence]  A sequence of functions $\{f_n\}$ converges locally uniformly to a function $f$ on $\mathbb R$ if, for every compact set $\mathcal K\subset\mathbb R$, \begin{equation} \sup_{y\in\mathcal K}|f_n(y)-f(y)| \to 0.  \end{equation} \end{defi}

We now establish the following continuity results for the optimal input and output distributions. 
\begin{prop} \label{prop:continuity_vs_A}
    If $A_n  \to  A_0>0,$ then
    \begin{enumerate}
\item \emph{Continuity of the Input and Output Distributions and Capacity:}
\begin{equation}
P_{A_n}^\star
\Rightarrow
P_{A_0}^\star,
\qquad
P_{Y,A_n}^\star
\Rightarrow
P_{Y,A_0}^\star,
\end{equation}
and
\begin{equation}
C(A_n)\to C(A_0).
\end{equation}
        
        \item   \emph{Convergence of the Output Densities:} 
        \begin{equation}
    f_{Y,A_n}^\star
    \to 
    f_{Y,A_0}^\star,
\end{equation}
where the convergence is locally uniform.
\item  \emph{Convergence of $G_A''$:}
\begin{equation}
    G_{A_n}''
    \to 
    G_{A_0}'',
\end{equation}
where the convergence is locally uniform.
\item \emph{Holomorphic continuity of $G_A$:}  Viewing each $G_A$ as the entire function $G_A:\bbC\to\bbC$ given by Lemma~\ref{lem:entire_extension_GA}, we have, for every compact set $\mathcal K\subset\bbC$, \begin{equation} \sup_{z\in\mathcal K} \left|G_{A_n}(z)-G_{A_0}(z)\right| \to 0. \end{equation}
    \end{enumerate}

\end{prop}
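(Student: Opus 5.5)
The plan is to obtain weak convergence of the CAIDs from compactness plus uniqueness, and then push it through the Gaussian kernel to the remaining three claims.

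\textbf{Step 1 (weak continuity and capacity).} Assume $A_n\in[A_0/2,2A_0]$. All $P^\star_{A_n}$ are supported in $[-2A_0,2A_0]$, so the family is tight. By Prokhorov, every subsequence has a further subsequence with $P^\star_{A_{n_k}}\Rightarrow Q$, and $Q$ is supported in $[-A_0,A_0]$ because $\limsup A_{n_k}=A_0$ and closed sets pass to the limit. The map $P_X\mapsto I(X;Y)=h(Y)-h(Z)$ is weakly continuous on measures with uniformly bounded support. Indeed, $f_Y(y)=\int\phi(y-x)P_X(\rmd x)$ converges pointwise, is bounded by $1/\sqrt{2\pi}$, and $|\log f_Y(y)|\le c_1+c_2y^2$ uniformly over inputs supported in $[-2A_0,2A_0]$. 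Dominated convergence then gives convergence of $h(Y)$.

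Next I would show $C$ is continuous. Monotonicity gives $C(A_n)\le C(A_0)$ when $A_n\le A_0$. For a lower bound, scale: $X^\star_{A_0}\cdot\min(A_n,A_0)/A_0$ is admissible for $A_n$, and its mutual information converges to $C(A_0)$ by the continuity above. The same scaling in the other direction (admissible for $A_0$) shows $\limsup C(A_n)\le C(A_0)$ for $A_n>A_0$.

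Hence $I(Q)=\lim C(A_{n_k})=C(A_0)$, so $Q$ is capacity-achieving for $A_0$. By uniqueness, $Q=P^\star_{A_0}$, and the full sequence converges. Weak convergence of the outputs follows because $Z$ is independent of $X$ (e.g.\ via characteristic functions).

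\textbf{Step 2 (densities).} Write $f^\star_{Y,A}(y)=\bbE[\phi(y-X^\star_A)]$. The functions $x\mapsto\phi(y-x)$, $y\in\mathcal K$, are uniformly bounded and equi-Lipschitz, so weak convergence is uniform over this family. Alternatively, pointwise convergence plus the uniform bound $|f'|\le\sup|\phi'|$ gives locally uniform convergence by Arzelà--Ascoli.

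\textbf{Step 3 (holomorphic convergence of $G_A$, which also gives Step 4).} Use the representation in Lemma~\ref{lem:entire_extension_GA}:
\begin{equation}
G_{A_n}(z)-G_{A_0}(z)=C(A_0)-C(A_n)-\int_{\bbR}\phi(y-z)\bigl(\log f^\star_{Y,A_n}(y)-\log f^\star_{Y,A_0}(y)\bigr)\rmd y .
\end{equation}
The first term vanishes by Step 1. For the integral, the two inputs are needed pointwise and in domination:
\begin{itemize}
\item[(i)] $f^\star_{Y,A_n}(y)\ge\phi(|y|+2A_0)$, which gives the uniform quadratic bound $|\log f^\star_{Y,A_n}(y)|\le c_1+c_2y^2$;
\item[(ii)] for $z$ in a compact $\mathcal K\subset\bbC$, $|\phi(y-z)|\le\frac{1}{\sqrt{2\pi}}\rme^{-(y-\Re z)^2/2+(\Im z)^2/2}$, which is bounded by an integrable Gaussian envelope uniformly in $z\in\mathcal K$.
\end{itemize}
To get uniformity, split the integral at $|y|\le R$ and $|y|>R$. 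The tail is uniformly small by (i) and (ii). On $|y|\le R$, $\log f$ converges uniformly by Step 2 together with the positive lower bound (i), since $\log$ is uniformly continuous on $[\phi(R+2A_0),1]$. This proves claim 4.

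Claim 3 follows from claim 4. Holomorphic functions converging uniformly on compacts have derivatives converging uniformly on compacts (Cauchy estimates), and restricting to real compacts gives locally uniform convergence of $G''$. Alternatively, one can differentiate under the integral using $\phi''$.

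\textbf{Main obstacle.} I expect Step 1 to be the main difficulty, specifically the continuity of $C$ and the identification of the limit. Both rely on uniqueness of the CAID and on the scaling argument.
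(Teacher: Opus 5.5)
Your proof is correct. For claims 1, 2 and 4 it follows the paper closely:
\begin{itemize}
\item tightness plus Prokhorov, feasibility of the subsequential limit, the scaling $\min\{1,A_n/A_0\}$ for $\liminf C(A_n)\ge C(A_0)$, and uniqueness of the CAID;
\item Arzel\`a--Ascoli for the output densities;
\item domination via $f^\star_{Y,A}(y)\ge\phi(|y|+B)$ and a Gaussian envelope for $|\phi(y-z)|$ on $\mathcal K$.
\end{itemize}

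The genuine difference is claim 3. The paper proves it before claim 4 by a separate real-variable argument. It writes $G_A''(x)=1-\int\phi(y-x)\,{\rm Var}(X_A^\star\mid Y_A^\star=y)\,\rmd y$, shows locally uniform convergence of the conditional moments $N_{k,A}/f^\star_{Y,A}$, and then applies dominated convergence. You instead get claim 3 from claim 4 via the Cauchy estimates (Weierstrass' theorem). This is shorter and avoids conditional moments entirely. It also yields locally uniform convergence of every derivative $G_A^{(k)}$ at once, even on complex compacts. The paper's route has the modest advantage of not depending on the entire extension, but it costs more work.

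The remaining differences are cosmetic:
\begin{itemize}
\item you prove weak continuity of $I(X;Y)$ yourself rather than citing Smith's L\'evy-metric continuity;
\item you use characteristic functions rather than the test functions $T_g$ for the outputs;
\item you split the integral at $|y|=R$ rather than placing $\sup_{z\in\mathcal K}$ inside the integrand.
\end{itemize}

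Two small points need tightening.

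First, in your dominated-convergence proof of continuity of $h(Y)$, the bounds $f_Y\le 1/\sqrt{2\pi}$ and $|\log f_Y(y)|\le c_1+c_2y^2$ do not give an integrable envelope for $f_Y|\log f_Y|$. You also need the Gaussian tail bound $f_Y(y)\le\phi(\max\{|y|-2A_0,0\})$, which holds uniformly over inputs supported in $[-2A_0,2A_0]$.

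Second, the $\limsup$ half of your continuity-of-$C$ argument is unnecessary, and as written it is incomplete. Concluding $\limsup C(A_n)\le C(A_0)$ from the feasibility of $X^\star_{A_n}A_0/A_n$ requires $I(X^\star_{A_n}A_0/A_n)-C(A_n)\to0$. Pointwise weak continuity of $I$ does not give this directly; it would need, for instance, a further subsequence argument. It is simpler to use what you already have: $Q$ is supported in $[-A_0,A_0]$, so $I(Q)\le C(A_0)$. Together with $I(Q)=\lim C(A_{n_k})\ge\liminf C(A_n)\ge C(A_0)$, this identifies $Q$ as capacity-achieving, which is exactly how the paper closes the argument. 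Continuity of $C$ then follows afterwards from the weak convergence of the whole sequence.
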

\begin{proof}
See Appendix~\ref{app:proof_continuity}.
\end{proof}

We now show that nondegenerate points are locally stable in the sense that they move continuously and remain a single atom. 

\begin{prop} \label{prop:contininous_motion}
Fix $A_0>0$, and let $x_0
    \in
    \supp(P_{A_0}^\star)\cap(-A_0,A_0)$ be a nondegenerate support point (i.e., $ G_{A_0}''(x_0)<0$).  Then, there exist $\delta>0$ and $\varepsilon>0$ such that, for every
$A$ satisfying
\begin{equation}
    |A-A_0|<\varepsilon,
\end{equation}
the interval $(x_0-\delta,x_0+\delta)$ contains exactly one point $x_A$ of $\supp(P_A^\star)$. Moreover, as $A\to A_0$, 
\begin{equation}
    x_A \to x_0
\end{equation}
and
\begin{equation}
    P_A^\star(\{x_A\})
     \to 
    P_{A_0}^\star(\{x_0\}). 
\end{equation}

\end{prop}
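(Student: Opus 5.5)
The plan is to combine Hurwitz's theorem (Lemma~\ref{lem:stability_lemma}) with the holomorphic continuity of $G_A$ from Proposition~\ref{prop:continuity_vs_A}, and then use weak convergence to control the masses.

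\textbf{Step 1: isolate $x_0$.} By Proposition~\ref{prop:multiplicity}, ${\rm ord}_{x_0}(G_{A_0})=2$. Since $G_{A_0}$ is entire and not identically zero, its zeros are isolated. So I choose $\delta>0$ with the following properties:
\begin{itemize}
\item $[x_0-\delta,x_0+\delta]\subset(-A_0,A_0)$;
\item $x_0$ is the only zero of $G_{A_0}$ in $\overline{B(x_0,\delta)}$;
\item $G_{A_0}''\le -c<0$ on $[x_0-\delta,x_0+\delta]$, using continuity of $G''_{A_0}$.
\end{itemize}

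\textbf{Step 2: at most one support point near $x_0$.} I argue along sequences $A_n\to A_0$, which suffices for the $\varepsilon$-statements. By part~4 of Proposition~\ref{prop:continuity_vs_A}, $G_{A_n}\to G_{A_0}$ uniformly on $\overline{B(x_0,\delta)}$. Lemma~\ref{lem:stability_lemma} then says $G_{A_n}$ has exactly two zeros, counted with multiplicity, in $B(x_0,\delta)$ for large $n$.

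For large $n$ the interval lies inside $(-A_n,A_n)$. Any support point of $P_{A_n}^\star$ there is an interior zero, and interior zeros have even order at least $2$ by Proposition~\ref{prop:multiplicity}. Hence there is at most one such support point.

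The same conclusion follows from part~3: for large $n$, $G_{A_n}''<0$ on the interval, so $G_{A_n}$ is strictly concave there. Since $G_{A_n}\le 0$ on the interval, it can vanish at no more than one point of it.

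\textbf{Step 3: at least one support point near $x_0$.} This follows from weak convergence (part~1). Take a continuous bump $g$ supported in $(x_0-\delta,x_0+\delta)$ with $g(x_0)=1$ and $0\le g\le 1$. Then $\int g\,\rmd P_{A_n}^\star\to\int g\,\rmd P_{A_0}^\star\ge P_{A_0}^\star(\{x_0\})>0$. So $P_{A_n}^\star$ charges the interval for large $n$, which gives exactly one atom $x_{A_n}$ there.

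A standard contradiction-by-subsequence argument then yields uniform $\varepsilon$ and $\delta$.

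\textbf{Step 4: convergence of location and mass.} Shrinking $\delta$ to any $\delta'<\delta$ and repeating Steps~2--3 gives exactly one atom in $(x_0-\delta',x_0+\delta')$. By uniqueness in the larger interval this atom is $x_A$, so $x_A\to x_0$.

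For the mass, the atom is isolated in the $\delta$-interval, so $P_A^\star(\{x_A\})=P_A^\star((x_0-\delta,x_0+\delta))$. The closed interval $[x_0-\delta,x_0+\delta]$ is a continuity set of $P_{A_0}^\star$: its boundary points carry no mass, since $x_0$ is the only support point of $P_{A_0}^\star$ within distance $\delta$. The portmanteau theorem then gives $P_A^\star(\{x_A\})\to P_{A_0}^\star(\{x_0\})$. Note that the boundary points $x_0\pm\delta$ likewise carry no mass under $P_{A_n}^\star$, since they are not zeros in the open region and we may take $\delta$ slightly smaller if needed.

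\textbf{Main obstacle.} The hardest part is Step~2, specifically ensuring that the Hurwitz count of two zeros in the complex disk cannot be realized by two distinct real support points. Evenness of multiplicity handles this: two distinct support points would each contribute at least $2$, for a total of at least $4$. The concavity route from part~3 avoids the issue entirely, so I would try that first.
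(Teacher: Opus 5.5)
Your proposal is correct and follows essentially the same route as the paper: strict concavity of $G_A$ on $[x_0-\delta,x_0+\delta]$, obtained from locally uniform convergence of $G_A''$ and combined with the KKT inequality, gives at most one support point; weak convergence gives at least one; shrinking the radius and applying the Portmanteau theorem yield $x_A\to x_0$ and convergence of the mass. Your alternative uniqueness argument via Hurwitz's theorem and even multiplicities also works (the paper uses that mechanism in Lemma~\ref{lem:local_support_from_order}), and your closing remark about $x_0\pm\delta$ carrying no $P_{A_n}^\star$-mass is unnecessary, since the Portmanteau theorem only requires that the limit measure $P_{A_0}^\star$ not charge the boundary.
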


\begin{proof}
By claims~1 and~3 of Proposition~\ref{prop:continuity_vs_A}, we have
\begin{equation}
    P_A^\star\Rightarrow P_{A_0}^\star
    \qquad\text{as}\qquad A\to A_0
    \label{eq:weak_continuity_used_persistence}
\end{equation}
and
\begin{equation}
    G_A''\to G_{A_0}''
    \qquad\text{locally uniformly as}\qquad A\to A_0.
    \label{eq:second_derivative_continuity}
\end{equation}

Since $x_0$ is an isolated interior support point and
$G_{A_0}''(x_0)<0$, we may choose $\delta>0$ and $\eta>0$ such that
\begin{equation}
    [x_0-\delta,x_0+\delta]\subset(-A_0,A_0),
    \qquad
    \supp(P_{A_0}^\star)\cap[x_0-\delta,x_0+\delta]
    =
    \{x_0\},
    \label{eq:isolated_support_point_interval}
\end{equation}
and
\begin{equation}
    G_{A_0}''(x)\leq-2\eta,
    \qquad
    x\in[x_0-\delta,x_0+\delta].
    \label{eq:strict_concavity_at_A0}
\end{equation}
By \eqref{eq:second_derivative_continuity}, for all $A$ sufficiently
close to $A_0$,
\begin{equation}
    \sup_{x\in[x_0-\delta,x_0+\delta]}
    \left|G_A''(x)-G_{A_0}''(x)\right|
    <\eta.
    \label{eq:second_derivative_uniform_bound}
\end{equation}
Therefore,
\begin{equation}
    G_A''(x)\leq-\eta,
    \qquad
    x\in[x_0-\delta,x_0+\delta],
    \label{eq:strict_concavity_local_persistence}
\end{equation}
and hence $G_A$ is strictly concave on this interval. By taking $A$
sufficiently close to $A_0$, we may also ensure that
\begin{equation}
    [x_0-\delta,x_0+\delta]\subset(-A,A).
\end{equation}

It follows from \eqref{eq:isolated_support_point_interval} that the
boundary of $(x_0-\delta,x_0+\delta)$ has zero
$P_{A_0}^\star$-mass. Thus, by
\eqref{eq:weak_continuity_used_persistence},
\begin{equation}
    P_A^\star\bigl((x_0-\delta,x_0+\delta)\bigr)
    \to
    P_{A_0}^\star(\{x_0\})
    >0.
    \label{eq:local_mass_convergence}
\end{equation}
Consequently, $(x_0-\delta,x_0+\delta)$ contains at least one support
point of $P_A^\star$ for all $A$ sufficiently close to $A_0$.

We next show that it contains at most one such point. Suppose that
$a<b$ are two support points in this interval. By the KKT equality,
\begin{equation}
    G_A(a)=G_A(b)=0.
\end{equation}
However, strict concavity gives
\begin{equation}
    G_A\bigl(\lambda a+(1-\lambda)b\bigr)
    >
    \lambda G_A(a)+(1-\lambda)G_A(b)
    =
    0,
    \qquad \lambda\in(0,1),
\end{equation}
which contradicts the KKT inequality in \eqref{eq:KKT_upper_bound} (i.e., $G_A\leq0$ on $[-A,A]$).
Therefore, $(x_0-\delta,x_0+\delta)$ contains exactly one support point
of $P_A^\star$, which we denote by $x_A$.

To show that $x_A\to x_0$, fix any $r\in(0,\delta)$. The same weak
convergence argument gives
\begin{equation}
    P_A^\star\bigl((x_0-r,x_0+r)\bigr)
    \to
    P_{A_0}^\star(\{x_0\})
    >0.
\end{equation}
Hence, for all $A$ sufficiently close to $A_0$, the interval
$(x_0-r,x_0+r)$ contains a support point. Since $x_A$ is the unique
support point in $(x_0-\delta,x_0+\delta)$, it follows that
\begin{equation}
    |x_A-x_0|<r.
\end{equation}
Since $r>0$ is arbitrary,
\begin{equation}
    x_A \to x_0.
\end{equation}

Finally, since $x_A$ is the only support point in
$(x_0-\delta,x_0+\delta)$, \eqref{eq:local_mass_convergence} gives
\begin{equation}
    P_A^\star(\{x_A\})
    =
    P_A^\star\bigl((x_0-\delta,x_0+\delta)\bigr)
    \to
    P_{A_0}^\star(\{x_0\}).
\end{equation}
This concludes the proof.
\end{proof}

\begin{prop}
\label{prop:boundary_continuity}
Fix $A_0>0$. Then, as $A\to A_0$,
\begin{equation}
P_A^\star({A})
\to
P_{A_0}^\star({A_0}),
\qquad
P_A^\star({-A})
\to
P_{A_0}^\star({-A_0}).
\end{equation}
Moreover, for $A$ sufficiently close to $A_0$, the points $A$ and
$-A$ are the unique support points of $P_A^\star$ in sufficiently
small neighborhoods of $A_0$ and $-A_0$, respectively.
\end{prop}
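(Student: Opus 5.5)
By symmetry of $P_A^\star$ it suffices to treat the right endpoint; the statement for $-A$ then follows by reflection. The plan mirrors Proposition~\ref{prop:contininous_motion}. Weak convergence (claim~1 of Proposition~\ref{prop:continuity_vs_A}) will give the mass convergence. A local uniqueness argument will show that, near $A_0$, the only support point of $P_A^\star$ is $A$ itself. Strict concavity is not available at the boundary, so the uniqueness step will instead use the sign of $G_A'$, which by Proposition~\ref{prop:multiplicity} (claim~2) satisfies $G_{A_0}'(A_0)>0$.

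\textbf{Step 1 (uniqueness near the endpoint).} First I show that $G_A'\to G_{A_0}'$ locally uniformly on $\bbR$. This follows from claim~4 of Proposition~\ref{prop:continuity_vs_A}: uniform convergence of holomorphic functions on compact disks implies uniform convergence of their derivatives on smaller compacts, by the Cauchy estimates. Since $G_{A_0}'(A_0)>0$, I pick $\delta>0$ and $\eta>0$ with $G_{A_0}'\ge 2\eta$ on $[A_0-2\delta,A_0+2\delta]$. For $A$ close to $A_0$ this gives $G_A'\ge\eta$ on the same interval, so $G_A$ is strictly increasing there. I also take $|A-A_0|<\delta$. Any support point $x\in(A_0-\delta,A]$ satisfies $G_A(x)=0$ by \eqref{eq:equality_eq_in_KKT}, and $A$ lies in the same interval with $G_A(A)=0$. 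Strict monotonicity then forces $x=A$. Hence $A$ is the unique support point of $P_A^\star$ in $(A_0-\delta,A_0+\delta)$, using that $A$ is always in the support (Smith).

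\textbf{Step 2 (mass convergence).} Reducing $\delta$ if needed, I ensure $\supp(P_{A_0}^\star)\cap[A_0-\delta,A_0+\delta]=\{A_0\}$; this is possible because the support is finite. The boundary points $A_0\pm\delta$ then carry no $P_{A_0}^\star$-mass, so weak convergence gives
\begin{equation}
P_A^\star\bigl((A_0-\delta,A_0+\delta)\bigr)\to P_{A_0}^\star(\{A_0\}).
\end{equation}
By Step~1 the left-hand side equals $P_A^\star(\{A\})$, which yields the claim.

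The main obstacle is Step~1's derivative convergence; claim~3 of Proposition~\ref{prop:continuity_vs_A} only covers $G_A''$. I will obtain it from the holomorphic continuity in claim~4 via Cauchy's integral formula on a disk around $A_0$. If that route is problematic, I would instead use claim~1 together with the formula $G_A'(x)=x-\bbE[\bbE[X_A^\star|Y_A^\star=x+Z]]$ from Lemma~\ref{lem:Derivative}, controlling the conditional mean through locally uniform convergence of the output densities (claim~2).
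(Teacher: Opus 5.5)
Your proof is correct. Its overall structure matches the paper's: first isolate $A$ as the only support point of $P_A^\star$ near $A_0$, then get mass convergence from weak convergence and the Portmanteau theorem on an interval whose endpoints carry no $P_{A_0}^\star$-mass. Your Step~2 is essentially the paper's second half.

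The difference lies in the local uniqueness step. The paper uses ${\rm ord}_{A_0}(G_{A_0})=1$ and picks a closed disk $\overline{B(A_0,r)}$ in which $A_0$ is the only zero of $G_{A_0}$. It then applies Hurwitz's theorem (Lemma~\ref{lem:stability_lemma}) with claim~4 of Proposition~\ref{prop:continuity_vs_A}: $G_A$ has exactly one zero in $B(A_0,r)$, and that zero must be $A$ because $A\in\supp(P_A^\star)$. You instead upgrade claim~4 to locally uniform convergence of $G_A'$ via Cauchy estimates. You then use strict monotonicity of $G_A$ on a real interval, mirroring the strict-concavity argument of Proposition~\ref{prop:contininous_motion}.

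What each route buys:
\begin{itemize}
\item The paper's route reuses the zero-counting machinery of Lemma~\ref{lem:local_support_from_order}. It needs no convergence of derivatives, yields the slightly stronger fact that no other complex zero lies in the disk, and applies to zeros of any multiplicity.
\item Your route is more elementary and needs only the sign $G_{A_0}'(A_0)>0$, not a zero count. Through your fallback (Lemma~\ref{lem:Derivative} plus the appendix-style argument used for claim~3), it could be made purely real-variable.
\item The cost of your route is one extra convergence step, and it works only because the boundary zeros are simple.
\end{itemize}
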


\begin{proof}
We prove the claim for $A_0$; the argument for $-A_0$ is identical.
By Proposition~\ref{prop:multiplicity},
\begin{equation}
{\rm ord}_{A_0}(G_{A_0})=1.
\end{equation}
Choose $r>0$ sufficiently small such that $A_0$ is the only zero of
$G_{A_0}$ in $\overline{B(A_0,r)}$ and
\begin{equation}
\supp(P_{A_0}^\star)
\cap
[A_0-r,A_0+r]
=
{A_0}.
\end{equation}
By the holomorphic continuity in
Proposition~\ref{prop:continuity_vs_A} and
Lemma~\ref{lem:stability_lemma}, for all $A$ sufficiently close to
$A_0$, the disk $B(A_0,r)$ contains exactly one zero of $G_A$.
Since
\begin{equation}
A\in\supp(P_A^\star)
\end{equation}
and $A\in B(A_0,r)$ for $A$ sufficiently close to $A_0$, this zero is
$A$. Consequently,
\begin{equation}
\supp(P_A^\star)\cap(A_0-r,A_0+r)={A}.
\end{equation}

The boundary of $(A_0-r,A_0+r)$ has zero
$P_{A_0}^\star$-mass. Hence, by
Proposition~\ref{prop:continuity_vs_A} and the Portmanteau theorem,
\begin{align}
P_A^\star({A})
&=
P_A^\star\bigl((A_0-r,A_0+r)\bigr)\\
&\to
P_{A_0}^\star\bigl((A_0-r,A_0+r)\bigr)\\
&=
P_{A_0}^\star({A_0}).
\end{align}
The proof for the point $-A$ is analogous.
\end{proof}

\subsection{Support Size Transitions}
\label{sec:transtion_of_sup_points}

The following lemma converts the order of an isolated KKT zero into
a bound on the number of nearby support points.

\begin{lem}[Local support bound from zero multiplicity]
\label{lem:local_support_from_order}
Fix $A_0>0$, and suppose that zero is an isolated zero of $G_{A_0}$
of order
\begin{equation}
{\rm ord}_0 (G_{A_0})=m.
\end{equation}
Then there exist $0<r<A_0$ and $\varepsilon>0$ such that
\begin{equation}
\left|
\supp(P_A^\star)\cap(-r,r)
\right|
\leq
\frac{m}{2}
\label{eq:local_support_bound_from_order}
\end{equation}
for every $A>0$ satisfying
\begin{equation}
|A-A_0|<\varepsilon.
\end{equation}
\end{lem}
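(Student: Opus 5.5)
The plan is to combine Hurwitz's theorem (Lemma~\ref{lem:stability_lemma}) with the fact that every interior support point is a zero of even multiplicity (Proposition~\ref{prop:multiplicity}, claim~1). Since $0$ is an isolated zero of the entire function $G_{A_0}$, we can choose $r\in(0,A_0)$ so that $0$ is the only zero of $G_{A_0}$ in the closed disk $\overline{B(0,r)}$. By the holomorphic continuity in claim~4 of Proposition~\ref{prop:continuity_vs_A}, $G_A\to G_{A_0}$ uniformly on $\overline{B(0,r)}$ as $A\to A_0$. This convergence is along arbitrary sequences $A_n\to A_0$, so it gives uniform closeness for all $A$ in a punctured neighborhood. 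Lemma~\ref{lem:stability_lemma} then applies, possibly through a sequential contradiction argument to pass from sequences to an $\varepsilon$-neighborhood. It yields $\varepsilon>0$ such that, for $|A-A_0|<\varepsilon$, $G_A$ has exactly $m$ zeros in $B(0,r)$, counted with multiplicity.

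Next, we shrink $\varepsilon$ so that $r<A-\varepsilon'$ for all such $A$; this is possible because $r<A_0$. Then $(-r,r)\subset(-A,A)$, so every support point of $P_A^\star$ in $(-r,r)$ is an interior zero of $G_A$ on the real line. By claim~1 of Proposition~\ref{prop:multiplicity}, applied at amplitude $A$, each such zero has multiplicity at least $2$. Distinct support points give distinct zeros in $B(0,r)\cap\bbR$. Summing multiplicities therefore gives
\begin{equation}
2\left|\supp(P_A^\star)\cap(-r,r)\right| \le N_{\mathrm{mul}}\bigl(B(0,r),G_A\bigr)=m,
\end{equation}
which is \eqref{eq:local_support_bound_from_order}. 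Any complex non-real zeros only add to the right-hand side, so they do no harm. This also covers $A=A_0$ trivially.

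The main obstacle is the uniformity in $A$: claim~4 of Proposition~\ref{prop:continuity_vs_A} and Lemma~\ref{lem:stability_lemma} are stated for sequences. We handle this by contradiction. If no such $\varepsilon$ existed, there would be $A_n\to A_0$ with more than $m/2$ support points in $(-r,r)$, hence more than $m$ zeros with multiplicity in $B(0,r)$. This contradicts Hurwitz's theorem applied to $G_{A_n}\to G_{A_0}$. A minor point is that the zero count is for the open disk while uniform convergence is needed on the closed disk. This is fine because the absence of zeros of $G_{A_0}$ on the boundary circle is guaranteed by the choice of $r$.
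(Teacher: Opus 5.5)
Your proposal is correct and follows the paper's proof: pick $r$ so that $0$ is the only zero of $G_{A_0}$ in $\overline{B(0,r)}$, use claim~4 of Proposition~\ref{prop:continuity_vs_A} with Hurwitz's theorem to get exactly $m$ zeros of $G_A$ in $B(0,r)$ counted with multiplicity, then use $r<A$ and the even multiplicity of interior zeros to conclude $2\ell\le m$. Your sequential contradiction argument for passing from sequences $A_n\to A_0$ to an $\varepsilon$-neighborhood is a welcome explicit version of a step the paper leaves implicit. The garbled condition ``$r<A-\varepsilon'$'' should simply read $\varepsilon\le A_0-r$, which gives $r<A$.
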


\begin{proof}
Choose $r\in(0,A_0/2)$ sufficiently small such that zero is the only
zero of $G_{A_0}$ in the closed disk
\begin{equation}
    \overline{B(0,r)}
    =
    \{z\in\bbC:|z|\leq r\}.
\end{equation}
By claim~4 of Proposition~\ref{prop:continuity_vs_A},
\begin{equation}
    G_A \to G_{A_0}
\end{equation}
uniformly on $\overline{B(0,r)}$ as $A\to A_0$. Therefore, by
Lemma~\ref{lem:stability_lemma}, $G_A$ has exactly $m$ zeros in
$B(0,r)$, counted according to multiplicity, for all $A$ sufficiently
close to $A_0$.

By taking $A$ sufficiently close to $A_0$, we also have $r<A$. Hence,
every real zero of $G_A$ in $(-r,r)$ is an interior zero and has even
multiplicity by Proposition~\ref{prop:multiplicity}. Let $ x_1,\ldots,x_\ell$ be the distinct real zeros of $G_A$ in $(-r,r)$. Then,
\begin{equation}
    2\ell
    \leq
    \sum_{j=1}^{\ell}{\rm ord}_{x_j}(G_A)
    \leq
    m.
\end{equation}
Consequently,
\begin{equation}
    \ell\leq\frac{m}{2}.
\end{equation}
Finally, since every support point of $P_A^\star$ is a real zero of
$G_A$, we obtain
\begin{equation}
    \left|
        \supp(P_A^\star)\cap(-r,r)
    \right|
    \leq
    \frac{m}{2},
\end{equation}
which proves \eqref{eq:local_support_bound_from_order} and concludes the proof. 
\end{proof}

 We now establish a local cardinality bound and characterize where changes in local support cardinality can occur. This result provides an unoriented local form of the transition conjecture of Sharma and Shamai~\cite{sharma2010transition}.

\begin{thm}
\label{thm:local_support_transition}
Fix $A_0>0$, and let  $K(A)=\left|\supp(P_A^\star)\right|$. Then there exists $\varepsilon>0$ such that
\begin{equation}
K(A_0) \le K(A)
\leq
K(A_0)+1
\label{eq:local_support_cardinality_bound}
\end{equation}
for every $A>0$ satisfying
\begin{equation}
|A-A_0|<\varepsilon.
\end{equation}

 Moreover, every nonzero support point
$x_0\in\supp(P_{A_0}^\star)$ has a unique local continuation
$x_A\in\supp(P_A^\star)$ such that
\begin{equation}
x_A\to x_0
\end{equation}
and
\begin{equation}
P_A^\star({x_A})
\to
P_{A_0}^\star({x_0})
\end{equation}
as $A\to A_0$. For the boundary points $x_0=\pm A_0$, the
corresponding continuations are $x_A=\pm A$. Hence all local changes in support cardinality and all branching or merging events are confined to a neighborhood of the origin.

\end{thm}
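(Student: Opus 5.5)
The plan is to partition $[-A_0,A_0]$ into three kinds of regions: small neighborhoods of the boundary points $\pm A_0$, small neighborhoods of the nonzero interior support points, and a small neighborhood $(-r,r)$ of the origin. We then show that, for $A$ near $A_0$, the support of $P_A^\star$ lies in the union of these neighborhoods, and count support points in each one separately. By Theorem~\ref{thm:size_of_degenerate_solution}, every nonzero interior support point of $P_{A_0}^\star$ is nondegenerate. Proposition~\ref{prop:contininous_motion} therefore applies to each of them and yields disjoint intervals $(x_0-\delta,x_0+\delta)$, each containing exactly one support point $x_A$ with $x_A\to x_0$ and convergent mass. Proposition~\ref{prop:boundary_continuity} does the same for $\pm A_0$, with continuations $\pm A$. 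This already gives the second part of the statement for nonzero support points.

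Next we show that no support points of $P_A^\star$ lie outside these neighborhoods and away from the origin. Let $\mathcal U$ be the union of the finitely many open intervals above together with $(-r,r)$. For the compact part, take the set $\mathcal K=[-A_0-\rho,A_0+\rho]\setminus\mathcal U$ for small $\rho$. By Theorem~\ref{thm:size_of_degenerate_solution}, the only possible zero of $G_{A_0}$ in $(-A_0,A_0)$ that is not a support point is the origin. Also, $G_{A_0}'(\pm A_0)\neq0$, and $G_{A_0}>0$ just outside $[-A_0,A_0]$ by Proposition~\ref{prop:multiplicity}. Hence $G_{A_0}$ has no zeros on $\mathcal K$, so $|G_{A_0}|\ge c>0$ there. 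By claim~4 of Proposition~\ref{prop:continuity_vs_A}, $G_A$ also has no zeros on $\mathcal K$ for $A$ near $A_0$. Since the support of $P_A^\star$ lies in $[-A,A]\subset[-A_0-\rho,A_0+\rho]$ and consists of zeros of $G_A$, it is contained in $\mathcal U$.

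It remains to count support points in $(-r,r)$. We distinguish three cases at the origin.
\begin{itemize}
\item If $G_{A_0}(0)<0$, shrink $r$ so that $G_{A_0}<0$ on $[-r,r]$. Then the compact-set argument also covers $[-r,r]$, so there are no support points near $0$ and $K(A)=K(A_0)$.
\item If $0$ is a nondegenerate support point, Proposition~\ref{prop:contininous_motion} treats it like the other interior points, again giving $K(A)=K(A_0)$.
\item If $0$ is an inactive contact point or lies in $\cD_A$, Corollary~\ref{cor:order_at_origin} gives ${\rm ord}_0(G_{A_0})=m\in\{2,4\}$. Lemma~\ref{lem:local_support_from_order} then bounds the number of support points in $(-r,r)$ by $m/2$.
\end{itemize}
In the inactive contact case, $K(A_0)$ counts nothing at $0$ and at most one point appears, so $K(A_0)\le K(A)\le K(A_0)+1$. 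In the degenerate case, $K(A_0)$ counts one point at $0$, and at most two appear, so $K(A)\le K(A_0)+1$.

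For the lower bound in the degenerate case, we need at least one support point in $(-r,r)$. We get this from weak convergence: $P_A^\star((-r,r))\to P_{A_0}^\star(\{0\})>0$, with boundary of zero mass, exactly as in Proposition~\ref{prop:contininous_motion}. Adding the per-region counts then gives \eqref{eq:local_support_cardinality_bound}. Any change in cardinality can occur only in $(-r,r)$, and $r$ can be taken arbitrarily small, which gives the final assertion.

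The main obstacle is the exclusion step: we must make sure that no support point of $P_A^\star$ can escape into the gaps between neighborhoods, including points near $\pm A$ that lie outside $[-A_0,A_0]$ when $A>A_0$. This relies on uniform convergence of $G_A$ on a slightly enlarged compact set and on the strict nonvanishing of $G_{A_0}$ away from its zeros. We also have to handle carefully the strict positivity of $G_{A_0}$ just beyond $\pm A_0$, so that $\mathcal K$ is genuinely zero-free.
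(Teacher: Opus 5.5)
Your proposal is correct and follows essentially the same route as the paper: Theorem~\ref{thm:size_of_degenerate_solution} makes every nonzero interior support point nondegenerate, and Propositions~\ref{prop:contininous_motion} and~\ref{prop:boundary_continuity} give the local continuations. At the origin you use Corollary~\ref{cor:order_at_origin} with Lemma~\ref{lem:local_support_from_order}, plus weak convergence for the lower bound in the degenerate case, and you exclude stray support points on the complementary compact set via claim~4 of Proposition~\ref{prop:continuity_vs_A}. The differences are cosmetic. You merge the inactive-contact and degenerate cases into one application of the local-order lemma with $m\in\{2,4\}$, and you handle escape near $\pm A$ with the enlarged set $[-A_0-\rho,A_0+\rho]\setminus\mathcal U$ and $|G_{A_0}|\ge c$, rather than choosing the boundary neighborhoods to contain the gap between $\pm A_0$ and $\pm A$ (also, your third case should read $0\in\cD_{A_0}$, not $\cD_A$).
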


\begin{proof}
Let
\begin{equation}
    \supp(P_{A_0}^\star)
    =
    \{x_1,\ldots,x_{K(A_0)}\}.
\end{equation}
By Theorem~\ref{thm:size_of_degenerate_solution},
\begin{equation}
    \cD_{A_0}\cup\cI_{A_0}
    \subseteq
    \{0\}.
    \label{eq:only_origin_exceptional}
\end{equation}
Therefore, every nonzero interior support point of
$P_{A_0}^\star$ is nondegenerate. By
Proposition~\ref{prop:contininous_motion}, each such point has a
neighborhood containing exactly one support point of $P_A^\star$ for
all $A$ sufficiently close to $A_0$.

The boundary points $\pm A_0$ are handled by
Proposition~\ref{prop:boundary_continuity}: each gives rise to the
unique nearby boundary support point $\pm A$, respectively, and the
corresponding probability masses converge to those at amplitude
$A_0$.

It remains to consider a neighborhood of the origin. There are four possible cases.

\begin{itemize}

\item Suppose first that
\begin{equation}
    G_{A_0}(0)<0.
\end{equation}
Then $0 \notin \supp(P_{A_0}^\star)$.
By continuity, there exist $r>0$ and $\eta>0$ such that
\begin{equation}
    G_{A_0}(x)\leq-2\eta,
    \qquad x\in[-r,r].
\end{equation}
By claim~4 of Proposition~\ref{prop:continuity_vs_A}, for all $A$
sufficiently close to $A_0$,
\begin{equation}
    G_A(x)\leq-\eta,
    \qquad x\in[-r,r].
\end{equation}
Therefore, 
\begin{equation}
    \supp(P_{A}^\star) \cap (-r,r) = \varnothing.
\end{equation}
Thus, in this case the number of support points near the origin is unchanged and equal to zero.

\item Suppose that zero is a nondegenerate support point of
$P_{A_0}^\star$. Then, Proposition~\ref{prop:contininous_motion} provides $r>0$ such that, for all $A$ sufficiently close to $A_0$, 
\begin{equation}
    \left| \supp(P_A^\star) \cap (-r,r) \right| = 1.
\end{equation}
Thus, the single support point at the origin gives rise to exactly one nearby support point.

\item Suppose that
\begin{equation}
    0\notin\supp(P_{A_0}^\star)
    \qquad\text{and}\qquad
    G_{A_0}(0)=0.
\end{equation}
By Corollary~\ref{cor:order_at_origin},
\begin{equation}
    {\rm ord}_0(G_{A_0})=2.
\end{equation}
Therefore, Lemma~\ref{lem:local_support_from_order} gives $r>0$ such that, for all $A$ sufficiently close to $A_0$, 
\begin{equation}
    \left| \supp(P_A^\star) \cap (-r,r) \right| \le 1.
\end{equation}
Since the origin is not a support point at amplitude $A_0$, this neighborhood contributes zero points to $K(A_0)$ and at most one point to $K(A)$.

\item Finally, suppose that
\begin{equation}
    0\in\cD_{A_0}.
\end{equation}
By Corollary~\ref{cor:order_at_origin},
\begin{equation}
    {\rm ord}_0(G_{A_0})=4.
\end{equation}
 Hence, Lemma~\ref{lem:local_support_from_order} gives $r>0$ such that, for all $A$ sufficiently close to $A_0$, \begin{equation} 
\left| \supp(P_A^\star)\cap(-r,r) \right| \leq2. \label{eq:degenerate_origin_upper_local} 
\end{equation}

We now establish the corresponding lower bound. Choose $r>0$ sufficiently small that 
\begin{equation} 
\supp(P_{A_0}^\star)\cap[-r,r]=\{0\}. 
\end{equation} 
In particular, 
\begin{equation} 
P_{A_0}^\star((-r,r)) = P_{A_0}^\star(\{0\}) >0. \end{equation} 
Since the boundary points $\pm r$ have zero $P_{A_0}^\star$-mass, claim~1 of Proposition~\ref{prop:continuity_vs_A} and the Portmanteau theorem yield 
\begin{equation} 
P_A^\star((-r,r)) \to P_{A_0}^\star(\{0\}) >0 \qquad\text{as }A\to A_0. 
\end{equation} 
Consequently, 
\begin{equation} 
P_A^\star((-r,r))>0 
\end{equation} 
for all $A$ sufficiently close to $A_0$, and hence 
\begin{equation} 
\left| \supp(P_A^\star)\cap(-r,r) \right| \geq1. \label{eq:degenerate_origin_lower_local} 
\end{equation} 
Combining \eqref{eq:degenerate_origin_upper_local} and \eqref{eq:degenerate_origin_lower_local}, we obtain 
\begin{equation} 
1 \leq \left| \supp(P_A^\star)\cap(-r,r) \right| \leq2. 
\end{equation} 
Thus, the single support point at the origin can be replaced locally by either one or two support points, but it cannot disappear.

\end{itemize}

We now combine these local conclusions. Choose the neighborhoods above around every support point of $P_{A_0}^\star$ and around the origin, and shrink them, if necessary, so that their closures are pairwise disjoint. On the remaining compact subset of $[-A_0,A_0]$, the function $G_{A_0}$ has no zeros and, by the KKT inequality, is strictly negative. By claim~4 of Proposition~\ref{prop:continuity_vs_A}, $G_A$ remains strictly negative on this compact set for all $A$ sufficiently close to $A_0$. The boundary neighborhoods can also be chosen so as to contain the intervals between $\pm A_0$ and $\pm A$. Consequently, every support point of $P_A^\star$ lies in one of the selected neighborhoods. 

Let 
\begin{equation} 
s_0 = \mathbf{1}\{0\in\supp(P_{A_0}^\star)\}, 
\end{equation} 
where $\mathbf{1}\{\cdot\}$ is the indicator function.
There are exactly 
\begin{equation} 
K(A_0)-s_0 
\end{equation} 
nonzero support points at amplitude $A_0$, and each of them gives rise to exactly one nearby support point of $P_A^\star$. If 
\begin{equation} 
n_A = \left| \supp(P_A^\star)\cap(-r,r) \right| 
\end{equation} 
denotes the contribution of the neighborhood of the origin, the four cases above show that 
\begin{equation} 
s_0 \leq n_A \leq s_0+1. \label{eq:origin_local_count} 
\end{equation} 
Therefore, 
\begin{align} 
K(A_0) \le K(A_0)-s_0+n_A \leq  K(A_0)+1,
\end{align} 
and since $K(A) = K(A_0)-s_0+n_A$, we can conclude that
\begin{equation} 
K(A_0) \leq K(A) \leq K(A_0)+1 
\end{equation} 
for every $A$ sufficiently close to $A_0$, which proves \eqref{eq:local_support_cardinality_bound}. The preceding construction also shows that every nonzero support point has exactly one nearby continuation and that every possible change in the support configuration is confined to a neighborhood of the origin. This concludes the proof.

\end{proof}

\begin{rem} 
Theorem~\ref{thm:local_support_transition} shows that 
\begin{equation} 
K(A_0) \le K(A)\leq K(A_0)+1
\end{equation} 
for every $A$ sufficiently close to $A_0$. This local bound does not, however, establish that $K(A)$ is nondecreasing; in particular, it does not rule out a decrease in the support cardinality as $A$ increases. Thus, proving the full conjecture of Sharma and Shamai would additionally require showing that the map $A\mapsto K(A)$ is nondecreasing. This question remains open.
\end{rem}

 The next corollary characterizes the possible local changes in support
cardinality near the origin.

\begin{cor}
\label{cor:local_transition_geometry}
Fix $A_0>0$. Then there exist $r>0$ and $\varepsilon>0$ such that,
for every $A>0$ satisfying
\begin{equation}
    |A-A_0|<\varepsilon,
\end{equation}
all changes in the support configuration relative to
$P_{A_0}^\star$ are confined to $(-r,r)$. Moreover, the following
properties hold.

\begin{enumerate}
    \item If
    \begin{equation}
        \left|
        \supp(P_A^\star)\cap(-r,r)
        \right|=1,
    \end{equation}
    then
    \begin{equation}
        \supp(P_A^\star)\cap(-r,r)=\{0\}.
    \end{equation}

    \item If
    \begin{equation}
        \left|
        \supp(P_A^\star)\cap(-r,r)
        \right|=2,
    \end{equation}
    then there exists $x_A\in(0,r)$ such that
    \begin{equation}
        \supp(P_A^\star)\cap(-r,r)
        =
        \{-x_A,x_A\}.
    \end{equation}
\end{enumerate}

Consequently, locally in $A$, the only possible changes in the support
configuration at the origin are the appearance or disappearance of
the support point $0$, and the splitting or merging of the origin into
a symmetric pair.
\end{cor}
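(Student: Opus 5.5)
The plan is to obtain this corollary from Theorem~\ref{thm:local_support_transition} combined with the symmetry of the CAID. First, fix $r>0$ and $\varepsilon>0$ as in the proof of Theorem~\ref{thm:local_support_transition}. For every $A$ with $|A-A_0|<\varepsilon$, that proof shows three things: each support point of $P_A^\star$ lies in one of finitely many disjoint neighborhoods; the neighborhoods of the nonzero support points of $P_{A_0}^\star$ (including those of $\pm A_0$, whose continuations are $\pm A$) each contain exactly one support point of $P_A^\star$; and these continuations vary continuously. Hence every change in the support configuration relative to $P_{A_0}^\star$ is confined to $(-r,r)$. We take $r$ to be the minimum of the radii produced in the four cases at the origin, and $\varepsilon$ the minimum of the corresponding thresholds, so that the bound $|\supp(P_A^\star)\cap(-r,r)|\le 2$ holds throughout.

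The key step is symmetry. By Smith's result, $P_A^\star$ is unique and symmetric, so $\supp(P_A^\star)$ is invariant under $x\mapsto -x$, and so is $S_A:=\supp(P_A^\star)\cap(-r,r)$. The nonzero elements of a symmetric finite set come in pairs $\{x,-x\}$ with $x\neq 0$. Therefore $|S_A|$ is odd if and only if $0\in S_A$.

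We now treat the two cases. If $|S_A|=1$, the count is odd, so $0\in S_A$, and thus $S_A=\{0\}$. If $|S_A|=2$, the count is even, so $0\notin S_A$, and $S_A$ consists of exactly one symmetric pair $\{-x_A,x_A\}$; choosing $x_A>0$ gives $x_A\in(0,r)$.

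Finally, we combine this with the case analysis in the proof of Theorem~\ref{thm:local_support_transition}, where $s_0\le n_A\le s_0+1$ and $n_A\in\{0,1,2\}$. This yields the listed mechanisms:
\begin{itemize}
\item $n_A$ moves between $0$ and $1$: appearance or disappearance of the point $0$ (inactive contact at the origin);
\item $n_A$ moves between $1$ and $2$: splitting of the origin into $\{\pm x_A\}$, or the reverse merging (degenerate origin).
\end{itemize}
The transition $0\leftrightarrow 2$ is excluded by the bound $n_A\le s_0+1$. No step appears to be a genuine obstacle. The only care needed is to choose a single $r$ and $\varepsilon$ that work uniformly across the cases and are compatible with the disjoint neighborhoods of the nonzero support points, and this follows by taking minima.
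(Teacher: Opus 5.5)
Your proposal is correct and follows essentially the same route as the paper: use Theorem~\ref{thm:local_support_transition} to confine every local change in the support to $(-r,r)$, then use the symmetry of the unique CAID to force a single local point to be $0$ and a local pair to be $\{-x_A,x_A\}$. Your parity observation ($|S_A|$ is odd if and only if $0\in S_A$) is a slightly cleaner way to make explicit the step the paper leaves implicit, namely ruling out a two-point set of the form $\{0,x\}$.
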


\begin{proof}
By Theorem~\ref{thm:local_support_transition}, after choosing $r>0$
and $\varepsilon>0$ sufficiently small, every nonzero support point of
$P_{A_0}^\star$ has exactly one nearby continuation, and no additional
support point can occur outside $(-r,r)$. Thus, every local change in
the support configuration is confined to $(-r,r)$.

Since the CAID is unique and
symmetric, its support is symmetric about the origin. Therefore, if
$\supp(P_A^\star)\cap(-r,r)$ consists of a single point, that point
must be invariant under reflection and hence must equal zero.

Similarly, if
$\supp(P_A^\star)\cap(-r,r)$ consists of exactly two points, symmetry
implies that they must be of the form $\{-x_A,x_A\}$ for some
$x_A\in(0,r)$.
\end{proof}

\begin{rem}
The transition geometry described in
Corollary~\ref{cor:local_transition_geometry} is not expected to be
universal. For different channel models or input constraints, more
general support transitions may occur. For example, the numerical
results for the noncoherent Rayleigh fading channel subject to
simultaneous average- and peak-power constraints in
\cite[Fig.~3]{favano2025rayleigh} suggest that two distinct positive
support branches can approach one another and merge as the
average-power constraint is varied. Hence, the local persistence of
nonzero support points established here is a specific structural
feature of the amplitude-constrained AWGN problem, rather than a
generic property of discrete capacity-achieving distributions.
\end{rem}

\section{Conclusion and Outlook}
In this work, we demonstrate that the multiplicity budget of the KKT function provides an effective tool for analyzing transitions in the support of the capacity-achieving input distribution. By combining a bound on the total number of zeros, counted with multiplicity, with the local geometry imposed by the KKT inequality, we show that the origin is the only possible inactive contact point or degenerate interior support point. Together with the continuity of the optimal input distribution and the KKT function, this yields the local stability of nonzero support points and shows that, locally in $A$, the support cardinality can increase by at most one, with any additional support point originating at the origin. Although the zero-counting argument used in this work exploits the Gaussian kernel, the underlying multiplicity analysis depends primarily on the KKT inequality and the analyticity of the corresponding KKT function. This suggests that the same approach may extend to other channels for which an appropriate zero-counting result and suitable regularity properties are available. Several interesting questions remain open. In particular, to establish the full conjecture of Sharma and Shamai \cite{sharma2010transition}, it remains to prove that the map $ A\mapsto K(A) $ is nondecreasing. The local transition theorem established here does not rule out a decrease in the support cardinality as $A$ increases.

Numerical evidence suggests that a considerably stronger property may
hold. Let
\begin{equation}
    \boldsymbol{p}(A)
    =
    \bigl(
        p_{A,1}^{\downarrow},
        p_{A,2}^{\downarrow},
        \ldots
    \bigr)
\end{equation}
denote the probability masses of $P_A^\star$, arranged in nonincreasing
order and padded with zeros. Numerical simulations indicate that, for
$0<A_1<A_2$, these vectors appear to satisfy
\begin{equation}
    \boldsymbol{p}(A_2)
    \prec
    \boldsymbol{p}(A_1),
    \label{eq:numerical_majorization_property}
\end{equation}
where $\prec$ denotes majorization. Equivalently, let
\begin{equation}\label{eq:partial_sums}
    S_r(A)=\sum_{j=1}^r p_{A,j}^{\downarrow}.
\end{equation}
Then, \eqref{eq:numerical_majorization_property} is equivalent to
\begin{equation}
    S_r(A_2)\leq S_r(A_1),
    \qquad r\geq1.
\end{equation}
Thus, the probability masses appear to become progressively less
concentrated as the amplitude constraint increases. Fig.~\ref{fig:majorization} provides numerical evidence for the validity of \eqref{eq:numerical_majorization_property}. 

\begin{figure}
\centering
\input{majorization}
\caption{Plot of partial sums in \eqref{eq:partial_sums} vs.~$A$.}
\label{fig:majorization}
\end{figure}

If this
majorization property could be established, the majorization property of R\'enyi entropies  of nonnegative order \cite{sason2018tight}, would imply that all
R\'enyi entropies  of the optimal input distribution are nondecreasing
in $A$. In particular, since
\begin{equation}
    H_0(P_A^\star)=\log K(A),
\end{equation}
it would immediately yield the desired monotonicity of $K(A)$.

\appendices

\section{Proof of Lemma~\ref{lem:roll_mulitplicity}}
\label{app:lem:roll_mulitplicity}

Since $x_i$ is a zero of $f$ of multiplicity $m_i$, we have
\begin{equation}
f(x_i)=f'(x_i)=\cdots=f^{(m_i-1)}(x_i)=0,
\qquad
f^{(m_i)}(x_i)\neq 0.
\end{equation}
Therefore, $x_i$ is a zero of $f'$ of multiplicity exactly
$m_i-1$. These zeros contribute
\begin{equation}
\sum_{i=1}^r (m_i-1)
\end{equation}
to the number of zeros of $f'$, counted with multiplicity.

In addition, for every $i=1,\ldots,r-1$, we have
\begin{equation}
f(x_i)=f(x_{i+1})=0.
\end{equation}
By Rolle's theorem, there exists a point
\begin{equation}
c_i\in(x_i,x_{i+1})
\end{equation}
such that
\begin{equation}
f'(c_i)=0.
\end{equation}
The intervals $(x_i,x_{i+1})$ are disjoint, so these provide at least
$r-1$ additional distinct zeros of $f'$.

Consequently,
\begin{align}
N_{\mathrm{mul}}(f')
&\geq
\sum_{i=1}^r(m_i-1)+(r-1)\\
&=
\left(\sum_{i=1}^r m_i\right)-1\\
&=
N_{\mathrm{mul}}(f)-1.
\end{align}

\section{Proof of Proposition~\ref{prop:continuity_vs_A}}
\label{app:proof_continuity}

\subsection{Proof of Continuity of the Input and Output
Distributions and Capacity}

Since $A_n\to A_0$, the sequence $\{A_n\}$ is bounded. Hence, there
exists $B<\infty$ such that
\begin{equation}
    \supp(P_{A_n}^\star)\subseteq[-B,B],
    \qquad n\geq1.
    \label{eq:common_compact_support}
\end{equation}
Therefore, the sequence $\{P_{A_n}^\star\}$ is tight.\footnote{A sequence of probability measures $\{P_n\}$ on $\bbR$ is
said to be tight if, for every $\varepsilon>0$, there exists a compact
set $\mathcal K\subset\bbR$ such that
$P_n(\mathcal K)\geq1-\varepsilon$ for every $n$. Equivalently, there
exists $M<\infty$ such that
$\sup_n P_n(\{x:|x|>M\})<\varepsilon$.} By Prokhorov's
theorem \cite[Thm.~9.6.2]{resnick1999probability}, every
subsequence has a further weakly convergent subsequence.

Consider an arbitrary subsequence of $\{P_{A_n}^\star\}$ and extract
a further subsequence, indexed by $\{n_k\}$, such that
\begin{equation}
    P_{A_{n_k}}^\star
    \Rightarrow
    P
    \label{eq:subsequential_input_limit}
\end{equation}
for some probability distribution $P$. We first show that $P$ is
feasible at amplitude $A_0$. Fix $\delta>0$. Since
$A_{n_k}\to A_0$, for all sufficiently large $k$,
\begin{equation}
    \supp(P_{A_{n_k}}^\star)
    \subseteq
    [-A_0-\delta,A_0+\delta].
\end{equation}
Since $[-A_0-\delta,A_0+\delta]$ is closed, the Portmanteau theorem
and \eqref{eq:subsequential_input_limit} imply
\begin{equation}
    P([-A_0-\delta,A_0+\delta])=1.
\end{equation}
Since $\delta>0$ is arbitrary, it follows that
\begin{equation}
    P([-A_0,A_0])=1.
\end{equation}
Thus, $P$ is feasible for the capacity problem at amplitude $A_0$.

We next show that $P$ is capacity achieving at amplitude $A_0$.
Define
\begin{equation}
    s_n
    =
    \min\left\{1,\frac{A_n}{A_0}\right\},
\end{equation}
and let $\widetilde P_n$ denote the distribution of
$s_nX_{A_0}^\star$. By construction, $\widetilde P_n$ is feasible at
amplitude $A_n$. Moreover, since $s_n\to1$,
\begin{equation}
    \widetilde P_n
    \Rightarrow
    P_{A_0}^\star.
    \label{eq:scaled_optimal_input_convergence}
\end{equation}

For the scalar Gaussian channel, the mutual-information functional is
continuous, with respect to the L\'evy metric, on the set of input
distributions supported on a fixed compact interval
\cite{smith1971information}. Since the L\'evy metric metrizes weak
convergence on $\bbR$, and all the distributions considered here are
supported on a common compact interval, weak convergence implies
convergence of the corresponding mutual informations. Therefore,
from \eqref{eq:scaled_optimal_input_convergence},
\begin{equation}
    I(\widetilde P_n)
    \to
    I(P_{A_0}^\star)
    =
    C(A_0).
    \label{eq:scaled_MI_convergence}
\end{equation}
Since $\widetilde P_n$ is feasible at amplitude $A_n$,
\begin{equation}
    C(A_n)
    \geq
    I(\widetilde P_n),
\end{equation}
and consequently
\begin{equation}
    \liminf_{n\to\infty}C(A_n)
    \geq
    C(A_0).
    \label{eq:capacity_lower_continuity}
\end{equation}

On the other hand, applying the same continuity result to
\eqref{eq:subsequential_input_limit} gives
\begin{align}
    I(P)
    &=
    \lim_{k\to\infty}
    I(P_{A_{n_k}}^\star)\\
    &=
    \lim_{k\to\infty}
    C(A_{n_k}).
    \label{eq:subsequence_capacity_limit}
\end{align}
Hence,
\begin{align}
    I(P)
    &=
    \lim_{k\to\infty}C(A_{n_k})\\
    &\geq
    \liminf_{n\to\infty}C(A_n)\\
    &\geq
    C(A_0),
    \label{eq:limit_distribution_capacity_lower}
\end{align}
where the last inequality follows from
\eqref{eq:capacity_lower_continuity}. Since $P$ is feasible at
amplitude $A_0$, the definition of capacity also gives
\begin{equation}
    I(P)\leq C(A_0).
\end{equation}
Therefore,
\begin{equation}
    I(P)=C(A_0).
\end{equation}
Thus, $P$ is capacity achieving at amplitude $A_0$. By uniqueness of
the CAID
\cite{smith1971information},
\begin{equation}
    P=P_{A_0}^\star.
    \label{eq:subsequential_limit_unique}
\end{equation}

Since the original subsequence was arbitrary, every subsequence of
$\{P_{A_n}^\star\}$ has a further subsequence converging weakly to
$P_{A_0}^\star$. It follows that the entire sequence converges:
\begin{equation}
    P_{A_n}^\star
    \Rightarrow
    P_{A_0}^\star.
    \label{eq:input_weak_continuity_final}
\end{equation}
Indeed, otherwise there would exist a weak neighborhood $\mathcal U$
of $P_{A_0}^\star$ and a subsequence lying entirely outside
$\mathcal U$; that subsequence could not have a further subsequence
converging to $P_{A_0}^\star$, contradicting
\eqref{eq:subsequential_limit_unique}.

We also show the resulting continuity of the capacity, which will
be used below. By \eqref{eq:input_weak_continuity_final} and the
continuity of the mutual-information functional on the common compact
support,
\begin{align}
    C(A_n)
    &=
    I(P_{A_n}^\star)\\
    &\to
    I(P_{A_0}^\star)\\
    &=
    C(A_0).
    \label{eq:capacity_continuity}
\end{align}

It remains to establish weak convergence of the induced output
distributions. Let $g:\bbR\to\bbR$ be bounded and continuous, and
define
\begin{equation}
    T_g(x)
    :=
    \bbE[g(x+Z)]
    =
    \int_{\bbR}g(x+z)\phi(z)\rmd z.
\end{equation}
The function $T_g$ is bounded. It is also continuous: if $x_m\to x$,
then
\begin{equation}
    g(x_m+z)\to g(x+z)
\end{equation}
for every $z$, and dominated convergence yields
\begin{equation}
    T_g(x_m)\to T_g(x).
\end{equation}
Therefore, by \eqref{eq:input_weak_continuity_final},
\begin{align}
    \int_{\bbR} g(y)\,P_{Y,A_n}^\star(\rmd y)
    &=
    \int_{\bbR} T_g(x)\,P_{A_n}^\star(\rmd x)\\
    &\to
    \int_{\bbR} T_g(x)\,P_{A_0}^\star(\rmd x)\\
    &=
    \int_{\bbR} g(y)\,P_{Y,A_0}^\star(\rmd y).
\end{align}
Hence,
\begin{equation}
    P_{Y,A_n}^\star
    \Rightarrow
    P_{Y,A_0}^\star.
\end{equation}
This proves the first claim of Proposition~\ref{prop:continuity_vs_A}.

\subsection{Proof of Convergence of the Output Densities}

For every $y\in\bbR$,
\begin{equation}
    f_{Y,A_n}^\star(y)
    =
    \int_{\bbR}\phi(y-x)\,P_{A_n}^\star(\rmd x).
\end{equation}
Since $x\mapsto\phi(y-x)$ is bounded and continuous, the weak
convergence $P_{A_n}^\star\Rightarrow P_{A_0}^\star$ gives
\begin{equation}
    f_{Y,A_n}^\star(y)
    \to
    f_{Y,A_0}^\star(y).
\end{equation}
Moreover,
\begin{equation}
    \left|
        \frac{\rmd}{\rmd y}f_{Y,A_n}^\star(y)
    \right|
    \leq
    \sup_{t\in\bbR}|\phi'(t)|
    <\infty.
\end{equation}
Thus, the sequence $\{f_{Y,A_n}^\star\}$ is uniformly Lipschitz and,
hence, equicontinuous on $\bbR$. Since it converges pointwise to
$f_{Y,A_0}^\star$, the Arzel\'a-Ascoli theorem implies that the
convergence is uniform on every compact subset of $\bbR$; see, e.g.,
\cite[Thm.~4.44]{folland2013real}. Consequently,
\begin{equation}
    f_{Y,A_n}^\star
    \to
    f_{Y,A_0}^\star
\end{equation}
locally uniformly.

\subsection{Convergence of $G_A''$} 

Since $A_n\to A_0$, the distributions $P_{A_n}^\star$ are supported
on a common compact interval $[-B,B]$.  For $k=1,2$, define
\begin{equation}
    N_{k,A}(y)
    =
    \int_{\bbR}t^k\phi(y-t)\,P_A^\star(\rmd t),
\end{equation}
so that
\begin{equation}
    M_{k,A}(y)
    =
    \frac{N_{k,A}(y)}{f_{Y,A}^\star(y)}.
\end{equation}
Let $\mathcal L\subset\bbR$ be compact. Since the distributions
$P_{A_n}^\star$ are supported on the common compact interval $[-B,B]$,
the function
\begin{equation}
    (t,y)\mapsto t^k\phi(y-t)
\end{equation}
is bounded and uniformly continuous on $[-B,B]\times\mathcal L$.
Consequently, weak convergence $P_{A_n}^\star\Rightarrow
P_{A_0}^\star$, together with a finite-net argument on $\mathcal L$,
implies that
\begin{equation}
    \sup_{y\in\mathcal L}
    \left|N_{k,A_n}(y)-N_{k,A_0}(y)\right|
    \to 0.
\end{equation}
Moreover, $f_{Y,A_0}^\star$ is continuous and strictly positive.
Therefore,
\begin{equation}
    \inf_{y\in\mathcal L}f_{Y,A_0}^\star(y)>0.
\end{equation}
Since $f_{Y,A_n}^\star\to f_{Y,A_0}^\star$ uniformly on
$\mathcal L$, it follows that
\begin{equation}
    M_{k,A_n}\to M_{k,A_0},
    \qquad k=1,2,
\end{equation}
uniformly on $\mathcal L$. Therefore, if
\begin{equation}
    {\rm Var}(X_{A_n}^\star\mid Y_{A_n}^\star=y)
    =
    M_{2,{A_n}}(y)-M_{1,{A_n}}(y)^2,
\end{equation}
then
\begin{equation}
    {\rm Var}(X_{A_n}^\star\mid Y_{A_n}^\star=y)\to {\rm Var}(X_{A_0}^\star\mid Y_{A_0}^\star=y)
\end{equation}
locally uniformly. Moreover, since $|X_{A_n}^\star|\leq B$,
\begin{equation}
    0\leq {\rm Var}(X_{A_n}^\star\mid Y_{A_n}^\star=y) \leq B^2.
\end{equation}

From Lemma~\ref{lem:Derivative},
\begin{equation}
    G_A''(x)
    =
    1-\int_{\bbR}\phi(y-x) {\rm Var}(X_A^\star\mid Y_A^\star=y)\rmd y.
\end{equation}
Hence, for every compact set $\mathcal K\subset\bbR$,
\begin{align}
    \sup_{x\in\mathcal K}
    |G_{A_n}''(x)-G_{A_0}''(x)|
    &\leq
    \int_{\bbR}
        \sup_{x\in\mathcal K}\phi(y-x)
        \left|{\rm Var}(X_{A_n}^\star\mid Y_{A_n}^\star=y)-{\rm Var}(X_{A_0}^\star\mid Y_{A_0}^\star=y) \right|
    \rmd y \to 0,
\end{align}
where the last step follows from dominated convergence. Consequently,
\begin{equation}
    G_{A_n}''\to G_{A_0}''
\end{equation}
uniformly on every compact subset of $\bbR$.

\subsection{Proof of Holomorphic Continuity of $G_A$}
Let $\mathcal K\subset\bbC$ be compact.
By Lemma~\ref{lem:entire_extension_GA}, the KKT function admits the
entire extension
\begin{equation}
    G_A(z)
    =
    -h(Z)-C(A)
    -
    \int_{\bbR}
        \phi(y-z)\log f_{Y,A}^\star(y)
    \rmd y,
    \qquad z\in\bbC.
    \label{eq:complex_extension_GA_continuity}
\end{equation}

Since $A_n\to A_0$, there exists $B<\infty$ such that
\begin{equation}
    \supp(P_{A_n}^\star)\cup\supp(P_{A_0}^\star)
    \subseteq[-B,B]
\end{equation}
for every sufficiently large $n$. Therefore,
\begin{equation}
    \phi(|y|+B)
    \leq
    f_{Y,A_n}^\star(y)
    \leq
    \frac{1}{\sqrt{2\pi}},
    \qquad y\in\bbR,
\end{equation}
and the same bounds hold for $f_{Y,A_0}^\star$. Consequently, there
exists $c_B<\infty$ such that
\begin{equation}
    \left|\log f_{Y,A_n}^\star(y)\right|
    +
    \left|\log f_{Y,A_0}^\star(y)\right|
    \leq
    c_B(1+y^2).
    \label{eq:uniform_log_density_bound}
\end{equation}

Let
\begin{equation}
    M_{\mathcal K}
    =
    \sup_{z\in\mathcal K}|z|.
\end{equation}
Writing $z=u+\rmi v$, we have
\begin{align}
    |\phi(y-z)|
    &=
    \frac{1}{\sqrt{2\pi}}
    \exp\left(
        -\frac{(y-u)^2}{2}+\frac{v^2}{2}
    \right)\\
    &\le \frac{1}{\sqrt{2\pi}}
    \exp\left(
        -\frac{y^2}{4}+\frac{|z|^2}{2}
    \right)\\
    &\leq
    c_{\mathcal K}\exp\left(-\frac{y^2}{4}\right),
    \qquad z\in\mathcal K,
    \label{eq:complex_Gaussian_compact_bound}
\end{align}
where
\begin{equation}
    c_{\mathcal K}
    =
    \frac{\exp(M_{\mathcal K}^2/2)}{\sqrt{2\pi}}.
\end{equation}

By the locally uniform convergence of the output densities,
\begin{equation}
    \log f_{Y,A_n}^\star(y)
    \to
    \log f_{Y,A_0}^\star(y),
    \qquad y\in\bbR.
    \label{eq:log_output_density_convergence}
\end{equation}
Moreover, the first part of the proof gives
\begin{equation}
    C(A_n)\to C(A_0).
    \label{eq:capacity_continuity_used_complex}
\end{equation}
Using \eqref{eq:complex_extension_GA_continuity}, we obtain
\begin{align}
    \sup_{z\in\mathcal K}
    |G_{A_n}(z)-G_{A_0}(z)|
    &\leq
    |C(A_n)-C(A_0)| \notag\\
    &\quad+
    \int_{\bbR}
        \sup_{z\in\mathcal K}|\phi(y-z)|
        \left|
            \log f_{Y,A_n}^\star(y)
            -
            \log f_{Y,A_0}^\star(y)
        \right|
    \rmd y.
    \label{eq:complex_GA_difference_bound}
\end{align}
By \eqref{eq:uniform_log_density_bound} and
\eqref{eq:complex_Gaussian_compact_bound}, the integrand in
\eqref{eq:complex_GA_difference_bound} is bounded by
\begin{equation}
    c_{B,\mathcal K}(1+y^2)
    \exp\left(-\frac{y^2}{4}\right),
\end{equation}
which is integrable and independent of $n$. Thus, by
\eqref{eq:log_output_density_convergence} and the dominated convergence
theorem,
\begin{equation}
    \sup_{z\in\mathcal K}
    |G_{A_n}(z)-G_{A_0}(z)|
    \to 0.
\end{equation}
Since $\mathcal K\subset\bbC$ was arbitrary, the convergence is uniform
on every compact subset of $\bbC$.

\bibliography{refs.bib}
\bibliographystyle{IEEEtran}

\end{document}